\providecommand{\tabularnewline}{\\}
\theoremstyle{plain}
\newtheorem{thm}{\protect\theoremname}
  \theoremstyle{plain}
  \newtheorem{lem}[thm]{\protect\lemmaname}
  \theoremstyle{remark}
  \newtheorem{rem}[thm]{\protect\remarkname}
  \theoremstyle{plain}
  \newtheorem{prop}[thm]{\protect\propositionname}
  \providecommand{\lemmaname}{Lemma}
  \providecommand{\propositionname}{Proposition}
  \providecommand{\remarkname}{Remark}
\providecommand{\theoremname}{Theorem}
\begin{document}

\title{Theory of Performance Participation Strategies}

\author{Julia Kraus%
\thanks{Chair of Mathematical Finance, Technische Universitaet Muenchen, Germany,
julia.kraus@tum.de%
}, Philippe Bertrand%
\thanks{GREQAM, University of Aix-Marseille II, France and Euromed Management,
Marseille, France%
}, Rudi Zagst%
\thanks{Chair of Mathematical Finance, Technische Universitaet Muenchen, Germany%
}}

\date{February 7, 2011}
\maketitle
\begin{abstract}
The purpose of this article is to introduce, analyze and compare two
performance participation methods based on a portfolio consisting
of two risky assets: Option-Based Performance Participation (OBPP)
and Constant Proportion Performance Participation (CPPP). By generalizing
the provided guarantee to a participation in the performance of a
second risky underlying, the new strategies allow to cope with well-known
problems associated with standard portfolio insurance methods, like
e.g. the CPPI cash lock-in. This is especially an issue in times of
market crisis. However, the minimum guaranteed portfolio value at
the end of the investment horizon is not deterministic anymore, but
subject to systematic risk instead. With respect to the comparison
of the two strategies, various criteria are applied such as comparison
of terminal payoffs and payoff distributions. General analytical expressions
for all moments of both performance participation strategies as well
as standard OBPI and CPPI are derived. Furthermore, dynamic hedging
properties are examined, in particular classical delta hedging. \end{abstract}
\begin{keywords}
investment strategies, performance participation, CPPP, OBPP, CPPI,
OBPI
\end{keywords}

\section{Introduction\label{sec:Introduction}}

In this paper we introduce and analyze the class of performance participation
strategies. With this respect we define performance participation
strategies as financial strategies which are designed to provide a
minimum performance in terms of a fraction of the outcome of one risky
asset while keeping the potential for profits resulting from the outperformance
of another risky asset. Due to this minimum performance feature they
can be considered as a generalization of the well-known class of portfolio
insurance strategies.

While the provided guarantee is not deterministic anymore but subject
to systematic risk instead, these strategies avoid the cash lock-in
feature that face standard CPPI methods and thus are able to take
advantage of a possible market recovery. After a sharp market drop,
like e.g. at the beginning of 2009, the entire risk budget is maybe
exhausted and the portfolio fully invested in the cash market afterwards.
The defensive portfolio allocation then remains unchanged until the
end of the investment horizon (or the next reallocation date) and
prohibits to participate in a potential market regeneration. Consequently,
the CPPI portfolio will only return the riskless interest rate and
the associated costs of insurance significantly diminish the resulting
return.

To cope with the above issues we substitute the primary risk-free
asset with a second risky investment alternative, also called the
reserve asset. This allows to provide even in critical market situations,
where standard portfolio insurance approaches tend to fail, a participation
in the performance of a risky investment opportunity. In order to
minimize the additionally introduced risk one could e.g. think about
the minimum variance portfolio as a risky reserve asset, but also
riskier alternatives are possible.

In this paper we introduce two different performance participation
strategies, one static, option-based approach as well as a dynamic
portfolio reallocation rule. With respect to the former we pick up
the among practitioners very popular Best of Two%
\footnote{Note that the name 'Best of Two' is registered by the Conrad Hinrich
Donner Private Bank (see \citet{VittLeifeld2005}).%
} (Bo2) strategy. It was first introduced by \citet{DichtlSchlenger2002}
and mainly relies on the concept of so-called exchange options. An
exchange option written on a pair of risky assets $S_{1}$ and $S_{2}$
gives the option holder the right to exchange at maturity $T$ the
performance of one asset against the other.%
\footnote{See, e.g., \citet{Margrabe1978} for details.%
} Thus, by setting up a static portfolio consisting of an adequate
number of shares of one of the risky assets and the same number of
exchange options written on the second risky asset the investor will
receive at the end of the investment period the return (except for
strategy costs) of the better performing asset during the observation
horizon. In this way, by guaranteeing a performance participation
in one of the risky assets serious portfolio losses can be narrowed,
while keeping the potential of full participation in rising markets.
The based-upon OBPP (Option-Based Performance Participation) constitutes
a generalization of the Best of Two concept to provide general investor-defined
levels of performance participation. A similar approach was already
mentioned in \citet{Lindset2004} within the context of relative guarantees
for life insurance contracts or pension plans.

With respect to the latter dynamic approach we rely on the for portfolio
insurance purposes well-established CPPI concept. In their seminal
papers \citet{BlackJones1987} as well as \citet{BlackPerold1992} originally introduced the CPPI
approach on a portfolio consisting of two risky assets, i.e. with
stochastic floor. Nevertheless, in a wide range of the literature
in the field of portfolio insurance strategies the CPPI investment
rule is restricted to a constant, deterministic interest rate and
one risky asset. In this paper we pick up \citet{BlackPerold1992}'s
original idea to define the CPPP (Constant Proportion Performance
Participation) strategy as a dynamic approach to performance participation.
In analogy to the CPPI concept, the resulting strategy not only guarantees
a minimum performance participation in one of the risky assets but
also allows for a leveraged participation in the outperformance of
a second asset.

Within the scope of this paper we provide a detailed analysis and
comparison of the OBPP and the CPPP with respect to various criteria.
Although the two strategies were already mentioned in different areas
of the financial literature, to the authors' knowledge no profound
theoretical analysis was conducted so far. In the case of the OBPP
strategy the literature is scarce: Except for \citet{Margrabe1978}'s
basic paper about the evaluation of exchange options, there only exist
some empirical performance reviews with a focus on the practical application
of the Bo2 strategy, like e.g. the works of \citet{DichtlSchlenger2002,DichtlSchlenger2003}
and \citet{VittLeifeld2005} and more popular articles in practicioners'
journals. With respect to the CPPP strategy, as mentioned earlier,
the basic literature like e.g. \citet{BlackJones1987}, \citep{BlackRouhani1989}
or \citep{BertrandPrigent2005}, mainly restricts to the one-dimensional
case with one risky asset and a risk-free interest rate.

We therefore first of all provide a formalized and unified definition
of the two performance participation strategies. This enables us to
establish a very important relationship between standard portfolio
insurance and more general performance participation strategies. Based
on that finding, generalized analytical expressions for all moments
of the payoff distributions of the standard portfolio insurance strategies
as well as the built-upon performance participation strategies are
derived. The subsequent analysis is conducted in the spirit of \citep{BlackRouhani1989}
and \citet{BertrandPrigent2005} for portfolio insurance strategies.

The remainder of this paper is organized as follows: In Section 2,
we briefly introduce and discuss the two performance participation
strategies under consideration. We examine their final payoffs and
show that the newly introduced strategies can be directly linked to
the standard CPPI and the standard OBPI method. A detailed analytical
analysis of the moments of the resulting payoff distributions is conducted
in Section 3. With regard to the practical implementation Section
4 especially covers the dynamic behavior of the two strategies. To
conclude the analysis, Section 5 summarizes the main findings and
gives some concluding remarks.

\section{Definition of the OBPP and the CPPP Strategy}

\subsection{Financial market setup\label{sec:financialMarket}}

With respect to the theoretical analysis of the two performance participation
strategies we define a two-dimensional Black-Scholes model on the
filtered probability space $\left(\Omega,\mathcal{F},\mathbb{P},\mathbb{F}\right)$.
The financial market thus offers three investment possibilities: two
risky assets $S_{1}$, $S_{2}$ and a riskless cash account $S_{0}$
that are traded continuously in time during the investment period
$\left[0,T\right]$. Within the context of performance participation
strategies the time horizon $T$ is regarded as the time horizon for
the provided participation. The risk-free asset grows with constant
continuous interest rate $r$, i.e. $S_{0}(t)=e^{r\cdot t}$. The
evolution of the remaining two assets, such as a stock, stock portfolio
or market index, is subject to systematic risk and the corresponding
price process $S_{i}(t),\ 0\leq t\leq T$ of stock $i=1,2$ is modeled
by the geometric Brownian motion 
\begin{equation}
dS_{i}(t)=S_{i}(t)\left(\mu_{i}dt+\sum_{j=1}^{2}\sigma_{ij}dW_{j}(t)\right),\ S_{i}(0)=s_{i}>0.\label{eq:SDE_risky}
\end{equation}
Here, $\textbf{W}(t)=\left(W_{1}(t),W_{2}(t)\right)',\ 0\leq t\leq T$
denotes a standard two-dimensional Brownian motion with respect to
the real-world measure $\mathbb{P}$ and the Brownian filtration $\mathbb{F}=\left\{ \mathcal{F}_{t},0\leq t\leq T\right\} $.
The constant matrices $\mathbf{\mu}=\left(\mu_{1},\mu_{2}\right)'$
and $\bm{\sigma}=\left(\sigma_{i,j}\right)_{i,j=1,2}$ with 
\[
\bm{\sigma}=\left(\begin{array}{cc}
\sigma_{1} & 0\\
\rho_{12}\cdot\sigma_{2} & \sqrt{1-\rho_{12}^{2}}\cdot\sigma_{2}
\end{array}\right),
\]
describe the drifts, the volatilities and the correlations of the
asset prices, where we assume $\mu_{2}\geq\mu_{1}\geq r\geq0$ and
$\sigma_{2}\geq\sigma_{1}>0$. Due to these risk-return characteristics
now and in the following we will call asset $S_{1}$ the reserve asset
and the riskier asset $S_{2}$ the active asset.%
\footnote{Note that this notation was already used in the early papers of \citet{BlackJones1987}
and \citet{BlackPerold1992}.%
} Furthermore, in order to eliminate any arbitrage opportunities the
matrix $\bm{\sigma}$ has to be regular inducing $\rho_{12}\in\left(-1,1\right)$.
The two risky underlyings are thus not perfectly correlated with each
other and the resulting log-returns are bivariately normally distributed
subject to 
\[
\left(\begin{array}{c}
\ln\left(\frac{S_{1}(t)}{S_{1}(0)}\right)\\
\ln\left(\frac{S_{2}(t)}{S_{2}(0)}\right)
\end{array}\right)\sim N\left(\left(\begin{array}{c}
\mu_{1}-\frac{1}{2}\sigma_{1}^{2}\\
\mu_{2}-\frac{1}{2}\sigma_{2}^{2}
\end{array}\right)\cdot t,\bm{C}\cdot t\right),
\]
and variance-covariance matrix 
\[
\bm{C}=\left(\begin{array}{cc}
\sigma_{1}^{2} & \rho_{12}\sigma_{1}\sigma_{2}\\
\rho_{12}\sigma_{1}\sigma_{2} & \sigma_{2}^{2}
\end{array}\right).
\]

Within the scope of this paper we restrict ourselves to self-financing
strategies, that is strategies where money is neither injected nor
withdrawn from the portfolio during the trading period $(0,T)$. Moreover,
we are focussing on performance participation strategies that are
built on the two risky assets $S_{1}$, $S_{2}$ only. Following \citet{BlackScholes1973}
the underlying market is assumed to provide the usual perfect market
conditions including no arbitrage and completeness.%
\footnote{See, e.g., \citet{BlackScholes1973} or \citet{Shreve2008}.%
}

As introduced in Section \ref{sec:Introduction}, performance participation
strategies are investment strategies built on the two risky assets
$S_{1}$, $S_{2}$ that provide a minimum performance in terms of
a fraction of the outcome of the reserve asset $S_{1}$ while keeping
the potential for profits resulting from the outperformance of the
active asset $S_{2}$. To facilitate a return perspective now and
in the following we assume w.l.o.g. that the initial values of both
risky underlyings equal the initial portfolio value $V_{0}$, i.e.
$S_{1}(0)=S_{2}(0)=V_{0}.$

The next sections provide a formalized and unified definition of the
two performance participation strategies. We start with the definition
of the OBPP strategy as a static example of a performance participation
trading rule.

\subsection{The Option-Based Performance Participation (OBPP) strategy\label{sec:OBPP}}

The Option-Based Performance Participation (OBPP) strategy generates
the desired participation with the aid of exchange options. An exchange
option gives the option holder the right to exchange at its expiry
one risky asset for another. \citet{Margrabe1978} was the first to
introduce and develop an equation for the value of an exchange option.
Let $T$ denote the terminal trading date. The minimum terminal wealth
which must be achieved is given by the fraction $\alpha<1$ of the
performance of the reserve asset $S_{1}$ at maturity $T$, i.e. 
\begin{equation}
F(T)=\alpha\cdot S_{1}(T).\label{eq:floor}
\end{equation}
In analogy to standard portfolio insurance strategies we denote the
current value of the (stochastic) performance participation $F(t)=\alpha\cdot S_{1}(t)$
the floor.

Thus, purchasing at inception $t=0$ an adequate number of shares
$p$ of the active asset $S_{2}$ and one exchange option written
on $\alpha$ shares of the reserve asset $S_{1}$ and $p$ shares
of $S_{2}$, respectively, enables the desired performance participation.
Note that the dampening factor $p<1$ is related to the value of the
exchange option and thus reflects the costs of the desired performance
guarantee. It will be analyzed in more detail later on.

More precisely, given the payoff of the exchange option at maturity
$T$ 
\begin{equation}
V^{ex}\left(T;T,\alpha\cdot S_{1},p\cdot S_{2}\right)=\left(\alpha\cdot S_{1}(T)-p\cdot S_{2}(T)\right)^{+},\label{eq:payoff_Ex}
\end{equation}
the obtained terminal portfolio value of the OBPP strategy then yields
\begin{align}
V^{OBPP}\left(T;T,V_{0},\alpha,S_{1},S_{2}\right) & =p\cdot S_{2}\left(T\right)+\left(\alpha\cdot S_{1}\left(T\right)-p\cdot S_{2}\left(T\right)\right)^{+}\label{eq:payoff_OBPP}\\
 & =\max\left\{ \alpha\cdot S_{1}\left(T\right),p\cdot S_{2}\left(T\right)\right\} \geq\alpha\cdot S_{1}\left(T\right).
\end{align}
 Hence, additionally to the guaranteed wealth $\alpha\cdot S_{1}(T)$
a participation - at a percentage $p$ - in the outperformance of
the active asset $S_{2}$ is possible. The obtained payoff is the
maximum of the stochastic floor $\alpha\cdot S_{1}$ and the down-scaled
performance of the active asset $S_{2}$. Thus, within the context
of the OBPP strategy the purchased exchange option can be interpreted
as a protecting put option with stochastic strike $\alpha\cdot S_{1}$.%
\footnote{Note that \citet{Margrabe1978} was the first to use this interpretation.%
} Following from put-call-parity for exchange options%
\footnote{See \citet{Margrabe1978}.%
} the portfolio setup (\ref{eq:payoff_OBPP}) is furthermore equivalent
to holding the stochastic floor $F(t)=\alpha\cdot S_{1}(t)$ plus
the exchange option $V^{ex}\left(t;T,p\cdot S_{2},\alpha\cdot S_{1}\right)$
that gives the option holder at its maturity $T$ the right to exchange
$\alpha$ shares of the reserve asset $S_{1}$ against $p$ shares
of the active asset $S_{2}$. With this respect, the exchange option
plays the role of a call option written on the scaled underlying $p\cdot S_{2}$
with stochastic strike $\alpha\cdot S_{1}(T)$.

The percentage $p$ of the active asset is derived in such a way to
match the investor's initial endowment $V_{0}$ and insurance needs
$\alpha$. More precisely, at inception $t=0$ the initial capital
is adequately split to purchase both $\alpha$ shares of $S_{1}$
representing the stochastic floor $F$ and the protecting exchange
option $V^{ex}\left(t;T,p\cdot S_{2},\alpha\cdot S_{1}\right)$. This
implies the condition%
\footnote{Note that Equation (\ref{eq:p_cond}) can be solved for the adequate
percentage $p$ using standard zero search methods like, e.g., the
Newton gradient method. Furthermore, it only possesses a solution
in $p$ if we assume $\alpha<1$. This solution will be unique as
the value of the exchange option and thus the initial OBPP portfolio
value are strictly monotone in $p$. In case that $\alpha\geq1$ and
substituting $S_{1}(0)=S_{2}(0)=V_{0}$ Equation (\ref{eq:p_cond})
yields 
\[
\alpha\cdot S_{1}(0)+V^{ex}\left(0;T,p\cdot S_{2},\alpha\cdot S_{1}\right)-V_{0}\geq V^{ex}\left(0;T,p\cdot S_{2},\alpha\cdot S_{1}\right)>0,
\]
and there will be no solution.%
} 
\begin{equation}
V_{0}=V^{OBPP}\left(0;T,V_{0},\alpha,S_{1},S_{2}\right)=\alpha\cdot S_{1}(0)+V^{ex}\left(0;T,p\cdot S_{2},\alpha\cdot S_{1}\right).\label{eq:p_cond}
\end{equation}
Note that since the value of the exchange option is always positive,
the put-call-parity for exchange options directly induces the upper
bound $p<1$.

The OBPP is designed as a static investment strategy.%
\footnote{Note that in practice the underlying exchange option will usually
be dynamically replicated. This synthesized OBPP represents a dynamic
strategy as well. For further details we refer the interested reader
to Section \ref{sec:DynamicBehavior}.%
} Hence, once allocated the portfolio constitution remains unchanged
during the investment period $(0,T)$. By applying \citet{Margrabe1978}'s
formula for the price of the exchange option the current value of
the OBPP portfolio at any time $t\in\left[\left.0,T\right)\right.$
is given by 
\begin{equation}
V^{OBPP}\left(t;T,V_{0},\alpha,S_{1},S_{2}\right)=\alpha\cdot S_{1}(t)+V^{ex}\left(t;T,p\cdot S_{2},\alpha\cdot S_{1}\right),\label{eq:value_OBPP}
\end{equation}
where 
\begin{equation}
V^{ex}\left(t;T,p\cdot S_{2},\alpha\cdot S_{1}\right)=p\cdot S_{2}(t)\cdot\Phi\left(d_{1}\right)-\alpha\cdot S_{1}(t)\cdot\Phi\left(d_{2}\right),\label{eq:value_Ex}
\end{equation}
and 
\begin{align}
d_{1} & =\frac{\ln\left(\frac{p\cdot S_{2}(t)}{\alpha\cdot S_{1}(t)}\right)+\frac{1}{2}\cdot\hat{\sigma}_{2}^{2}\cdot(T-t)}{\hat{\sigma}_{2}\cdot\sqrt{T-t}},\label{eq:d_1_ex}\\
d_{2} & =d_{1}-\hat{\sigma}_{2}\cdot\sqrt{T-t}.\label{eq:d_2_ex}
\end{align}
Here, $\Phi$ denotes the cumulative distribution function of the
standard normal distribution. The constant $\hat{\sigma}_{2}$ given
by 
\begin{equation}
\hat{\sigma}_{2}^{2}=\sigma_{1}^{2}-2\cdot\rho_{12}\cdot\sigma_{1}\cdot\sigma_{2}+\sigma_{2}^{2},\label{eq:sigma_2_hat}
\end{equation}
is the volatility of the ratio process%
\footnote{See \citep{Margrabe1978} or later on Remark \ref{rmk:SDE_S_2_hat}.%
} $S_{2}/S_{1}$. Since it is a decreasing function in the correlation
$\rho_{12}$, the protecting exchange option is the cheaper the higher
the correlation between the two underlyings. A high correlation signifies
a likewise simultaneous evolution of the risky assets. Thus, the probability
that the option will be executed at maturity is reduced.

Note that since the value of the exchange option is always positive,
for all dates $t\in[0,T]$ the portfolio value is actually always
above the floor $F(t)$. The desired minimum performance participation
is thus not only provided at the terminal date $T$ but also on an
intertemporal basis. Furthermore, the portfolio weights of the corresponding
replicating strategy are always smaller or equal to one. The OBPP
strategy is thus leveraging neither of the two underlyings. As we
will see in the sequel, this is one of the main differences between
the two performance participation strategies under consideration.

As we have mentioned above, the constant $\hat{\sigma}_{2}$ represents
the diffusion of the process $S_{2}/S_{1}$, which we denote now and
in the following by $\hat{S}_{2}=S_{2}/S_{1}$. With respect to that
asset ratio, also called the index ratio%
\footnote{The notation goes back to \citep{BlackPerold1992}.%
}, we can establish a very important relationship between the newly
introduced OBPP and the standard OBPI strategy. 
\begin{lem}[OBPP and OBPI value]
\label{lem:OBPP_OBPI} Given the financial market $\left(\Omega,\mathcal{F},\mathbb{P},\mathbb{F}\right)$
defined in (\ref{eq:SDE_risky}) and the risky asset $S\in\left\{ S_{1},S_{2}\right\} $.
Furthermore, let $T$ denote the horizon of the desired insurance
level $\alpha^{PI}\leq e^{r\cdot T}$ in terms of the initial endowment
$V_{0}$ of a standard OBPI strategy, whose current portfolio value
at time $t\in[0,T]$ is given by%
\footnote{See, e.g., \citet{BertrandPrigent2005}.%
} 
\begin{equation}
V^{OBPI}\left(t;T,V_{0},\alpha^{PI},r,S\right)=\alpha^{PI}\cdot V_{0}\cdot e^{-r\cdot(T-t)}+Call\left(t;T,\alpha^{PI}\cdot V_{0},r,\sigma_{S},p^{PI}\cdot S\right).\label{eq:value_OBPI}
\end{equation}
Here, $Call\left(t;T,\alpha^{PI}\cdot V_{0},r,\sigma_{S},p^{PI}\cdot S\right)$
denotes the Black-Scholes value of a vanilla call option at time $t$,
with maturity $T$, written on $p^{PI}$ shares of the risky asset
$S$ with strike $\alpha^{PI}\cdot V_{0}$, risk-free interest rate
$r$ and volatility $\sigma_{S}$. The number of shares $p^{PI}<1$
of the risky underlying $S$ is adapted to the desired terminal guarantee
$\alpha^{PI}\cdot V_{0}$ and the initial endowment $V_{0}$ via the
condition 
\begin{equation}
V_{0}=\alpha^{PI}\cdot V_{0}\cdot e^{-r\cdot T}+Call\left(0;T,\alpha^{PI}\cdot V_{0},r,\sigma_{S},p^{PI}\cdot S\right).\label{eq:p_cond_OBPI}
\end{equation}
Then, at any time $t\in[0,T]$ the OBPP strategy can be represented
as a portfolio consisting of $S_{1}(t)$ shares of a standard OBPI
strategy in the discounted market with $S_{1}$ as numéraire 
\begin{equation}
V^{OBPP}\left(t;T,V_{0},\alpha,S_{1},S_{2}\right)=S_{1}(t)\cdot\hat{V}^{OBPI}\left(t;T,\hat{V}_{0},\alpha,\hat{r},\hat{S}_{2}\right).\label{eq:OBPP_OBPI}
\end{equation}
Note that in the discounted market with $S_{1}$ as numéraire we thus
consider the discounted assets $S_{1}/S_{1}$ and $S_{2}/S_{1}$.
Whereas the former is constant, yielding the risk-free interest rate
$\hat{r}=0$, the later represents the index ratio $\hat{S}_{2}$.
The same applies to the initial portfolio value that reduces to $\hat{V}_{0}=1$.
All other parameters remain the same. \end{lem}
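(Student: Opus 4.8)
The plan is to show that the OBPP value formula in the original market coincides, up to the factor $S_1(t)$, with a standard OBPI value computed in the $S_1$-numéraire market. First I would set up the change of numéraire explicitly: in the discounted market with $S_1$ as numéraire, the two tradeable assets become $\hat{S}_1 = S_1/S_1 \equiv 1$ and $\hat{S}_2 = S_2/S_1$. The constant asset $\hat{S}_1$ plays the role of the riskless bond with $\hat{r} = 0$, and the initial portfolio value rescales to $\hat{V}_0 = V_0/S_1(0) = 1$ (using $S_1(0) = V_0$). I would note, via Itô's lemma (as foreshadowed in the forthcoming Remark \ref{rmk:SDE_S_2_hat}), that $\hat{S}_2$ is itself a geometric Brownian motion whose volatility is exactly $\hat{\sigma}_2$ from Equation (\ref{eq:sigma_2_hat}). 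This is the crucial dictionary: the volatility of the exchange option's underlying ratio becomes the volatility of the single risky asset in the OBPI market.

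The core step is then a direct substitution into the OBPI value formula (\ref{eq:value_OBPI}). Writing $\hat{V}^{OBPI}(t;T,\hat{V}_0,\alpha,\hat{r},\hat{S}_2)$ with parameters $\hat{V}_0 = 1$, $\hat{r} = 0$, insurance level $\alpha$, and volatility $\hat{\sigma}_2$, the discounted-floor term $\alpha \cdot \hat{V}_0 \cdot e^{-\hat{r}(T-t)}$ collapses to $\alpha$, and the Black-Scholes call term becomes $Call(t;T,\alpha,0,\hat{\sigma}_2,p\cdot\hat{S}_2)$. I would then multiply the entire expression by $S_1(t)$ and verify term by term that it reproduces (\ref{eq:value_OBPP}): the floor term $S_1(t)\cdot\alpha = \alpha\cdot S_1(t)$ matches exactly, and the scaled call term must match the Margrabe exchange-option value (\ref{eq:value_Ex}). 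For the latter, I would compare the $d_1$, $d_2$ arguments: the Black-Scholes $d_1$ in the numéraire market has the form $[\ln(p\cdot\hat{S}_2(t)/\alpha) + \tfrac{1}{2}\hat{\sigma}_2^2(T-t)]/(\hat{\sigma}_2\sqrt{T-t})$, and since $\hat{S}_2(t) = S_2(t)/S_1(t)$ this is literally the expression in (\ref{eq:d_1_ex}). Because $\hat{r}=0$, the discounting factors in the call formula vanish, and multiplying the Black-Scholes call $[p\hat{S}_2(t)\Phi(d_1) - \alpha\Phi(d_2)]$ by $S_1(t)$ yields precisely $p\cdot S_2(t)\Phi(d_1) - \alpha\cdot S_1(t)\Phi(d_2)$, the Margrabe value.

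Finally I would confirm that the two scaling conditions agree, namely that the $p$ defined by the OBPP constraint (\ref{eq:p_cond}) and the $p^{PI}$ defined by the OBPI constraint (\ref{eq:p_cond_OBPI}) in the numéraire market are the same number. This follows by evaluating the already-established value identity at $t=0$: dividing (\ref{eq:p_cond}) by $S_1(0)=V_0$ gives exactly the numéraire-market budget equation with $\hat{V}_0=1$, $\hat{r}=0$, so the two monotone-in-$p$ conditions have the same unique solution. The main obstacle I anticipate is not algebraic but conceptual bookkeeping: one must be careful that the change of numéraire is valid at the level of \emph{values} of self-financing portfolios (so that a tradeable-asset price ratio remains a legitimate discounted asset with the claimed GBM dynamics) and that Margrabe's formula, which is itself derived via this very numéraire change, is being invoked consistently rather than circularly. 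Once the identification $\hat{\sigma}_2 = $ volatility of $\hat{S}_2$ and $\hat{r}=0$ is pinned down, the remainder is a term-by-term verification rather than a genuine difficulty.
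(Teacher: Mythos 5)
Your proposal is correct, and it reaches the paper's key identity $V^{ex}\left(t;T,p\cdot S_{2},\alpha\cdot S_{1}\right)/S_{1}(t)=Call\left(t;T,\alpha,\hat{r},\hat{\sigma}_{2},p\cdot\hat{S}_{2}\right)$ by a genuinely different route. The paper establishes this identity abstractly: it writes the exchange option as a risk-neutral conditional expectation under the cash-account numéraire, applies the change-of-numéraire theorem to pass to the measure $\hat{\mathbb{P}}_{1}$ associated with $S_{1}$, and then recognizes the resulting expectation $\mathbb{E}_{\hat{\mathbb{P}}_{1}}\bigl[\left(p\cdot\hat{S}_{2}(T)-\alpha\right)^{+}\,\big|\,\mathcal{F}_{t}\bigr]$ as a Black--Scholes call price with $\hat{r}=0$ and volatility $\hat{\sigma}_{2}$. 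You instead take the two closed-form expressions --- Margrabe's formula (\ref{eq:value_Ex}) and the Black--Scholes formula in (\ref{eq:value_OBPI}) with $\hat{V}_{0}=1$, $\hat{r}=0$, $\sigma=\hat{\sigma}_{2}$ --- as given and verify term by term that multiplication by $S_{1}(t)$ maps one onto the other, checking in particular that the $d_{1}$, $d_{2}$ arguments coincide. Your route is more elementary and entirely self-contained as an algebraic verification of the stated identity; it does not need the change-of-numéraire machinery at all, at the cost of depending on both closed forms being available and of the mild redundancy you yourself flag (Margrabe's formula is itself usually derived by exactly the numéraire change the paper performs). The paper's route is more conceptual and explains \emph{why} the identity must hold from no-arbitrage alone, which is the viewpoint it then reuses for the CPPP/CPPI analogue. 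Both arguments address the identification $p=p^{PI}$; you do so explicitly by dividing the budget constraint (\ref{eq:p_cond}) by $S_{1}(0)=V_{0}$, which is a cleaner justification than the paper's one-line remark, and your closing caveat about applying the numéraire change only to values of self-financing portfolios is a legitimate point that the paper leaves implicit.
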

\begin{proof}
The relationship can be easily derived by observing that the discounted
exchange option $V^{ex}\left(t;T,p\cdot S_{2},\alpha\cdot S_{1}\right)$
with respect to $S_{1}$ as numéraire is equivalent to a standard
vanilla call option written on $p$ shares of the risky underlying
$p\cdot\hat{S}_{2}$, with strike $\alpha$ and risk-free interest
rate $\hat{r}=0$. More precisely, let $\hat{\mathbb{P}}_{i}$, $i=0,1$
denote the equivalent martingale measure corresponding to the numéraire
$S_{i}$. Then, following from the risk-neutral pricing formula and
the change of numéraire theorem%
\footnote{See, e.g., \citet{Shreve2008}.%
} we obtain for the value of the exchange option $V^{ex}\left(t;T,p\cdot S_{2},\alpha\cdot S_{1}\right)$
at time $t\in[0,T]$ 
\begin{align*}
V^{ex}\left(t;T,p\cdot S_{2},\alpha\cdot S_{1}\right) & =S_{0}(t)\cdot\mathbb{E}_{\hat{\mathbb{P}}_{0}}\left[\left.\left(p\cdot\frac{S_{2}(T)}{S_{0}(T)}-\alpha\cdot\frac{S_{1}(T)}{S_{0}(T)}\right)^{+}\right|\mathcal{F}_{t}\right]\\
 & =S_{1}(t)\cdot\mathbb{E}_{\hat{\mathbb{P}}_{1}}\left[\left.\left(p\cdot\hat{S}_{2}(T)-\alpha\right)^{+}\right|\mathcal{F}_{t}\right].
\end{align*}
Since the value of a call option only depends on the risk-free interest
rate as well as the volatility of the risky asset, yielding $\hat{r}=0$
and $\hat{\sigma}_{2}$ in the discounted market, we conclude that
\[
\mathbb{E}_{\hat{\mathbb{P}}_{1}}\left[\left.\left(p\cdot\hat{S}_{2}(T)-\alpha\right)^{+}\right|\mathcal{F}_{t}\right]=Call\left(t;T,\alpha,\hat{r},\hat{\sigma}_{2},p\cdot\hat{S}_{2}\right).
\]
Overall this leads to%
\footnote{Note that this equality was already shown in \citet{Margrabe1978}
using a different motivation.%
} 
\begin{equation}
\frac{V^{ex}\left(t;T,p\cdot S_{2},\alpha\cdot S_{1}\right)}{S_{1}(t)}=Call\left(t;T,\alpha,\hat{r},\hat{\sigma}_{2},p\cdot\hat{S}_{2}\right).\label{eq:Ex_discounted_Call}
\end{equation}
Note that due to this discounting property the adequate number of
shares $p$ is the same for the OBPP and the OBPI.

Hence, the additionally introduced source of risk in terms of a risky
reserve asset manifests itself as stochastic numéraire that allows
to reduce the newly introduced performance participation strategy
to its portfolio insurance equivalent in the discounted asset universe.
The stochastic dynamics of the index ratio $\hat{S}_{2}$ are provided
in the following remark. \end{proof}
\begin{rem}
\label{rmk:SDE_S_2_hat} Define the value process of the ratio of
the two risky assets $\hat{S}_{2}=S_{2}/S_{1}$. The process $\hat{S}_{2}$
is lognormal and given by the geometric Brownian motion 
\begin{equation}
d\hat{S}_{2}(t)=\hat{S}_{2}(t)\hat{\mu}_{2}dt+\hat{S}_{2}(t)\hat{\sigma}_{2}dW_{\hat{S}_{2}}(t),\ \ \hat{S}_{2}(0)=1,\label{eq:SDE_S_2_hat}
\end{equation}
with drift 
\begin{equation}
\hat{\mu}_{2}=\left(\mu_{2}-\mu_{1}\right)+\left(\sigma_{1}^{2}-\rho_{12}\sigma_{1}\sigma_{2}\right),\label{eq:mu_2_hat}
\end{equation}
volatility $\hat{\sigma}_{2}$ as defined in Equation (\ref{eq:sigma_2_hat})
and Wiener process 
\begin{equation}
W_{\hat{S}_{2}}=\frac{\rho_{12}\cdot\sigma_{2}-\sigma_{1}}{\hat{\sigma}_{2}}\cdot W_{1}+\frac{\sqrt{1-\rho_{12}^{2}}\cdot\sigma_{2}}{\hat{\sigma}_{2}}\cdot W_{2}.\label{eq:W_S_2_hat}
\end{equation}
\end{rem}
\begin{proof}
The stochastic dynamics follow directly from Itô's lemma and the one-dimensional
Lévy theorem.%
\footnote{See, e.g., \citet{Shreve2008}.%
}
\end{proof}
To conclude the section we analyze the additional scaling factor $p$
in more detail. As mentioned earlier, it is necessary to provide arbitrary
investor-specific levels of performance participation $\alpha<1$.
The OBPP is thus a generalization of the earlier mentioned Best of
Two strategy 
\[
V^{Bo2}\left(T;T,V_{0},S_{1},S_{2}\right)=p^{Bo2}\cdot\max\left\{ S_{1}(T),S_{2}(T)\right\} ,
\]
that (except for the factor $p^{Bo2}$) returns the better performing
underlying at the end of the investment horizon $T$. It corresponds
to the special case of the OBPP where $\alpha=p^{Bo2}$. Note that
the factor $p^{Bo2}<1$ cannot be omitted and is necessary to adjust
the portfolio allocation to the prespecified initial endowment $V_{0}$.

With respect to arbitrary participation levels the percentage $p$
is a decreasing function of $\alpha<1$. For this purpose we recall
the initial endowment Condition (\ref{eq:p_cond}) 
\[
V_{0}-\alpha\cdot S_{1}(0)=V^{ex}\left(0;T,p\cdot S_{2},\alpha\cdot S_{1}\right),
\]
or following from put-call-parity for exchange options equivalently
\[
V_{0}-p\cdot S_{2}(0)=V^{ex}\left(0;T,\alpha\cdot S_{1},p\cdot S_{2}\right),
\]
where the left-hand side is decreasing in $p$ whereas the value of
the exchange option is increasing in $\alpha$ and decreasing in $p$,
respectively.

Note that in the special case where the reserve asset is given by
a zero-coupon bond with face value $V_{0}$, i.e. $S_{1}(t)=V_{0}\cdot e^{-r\cdot(T-t)}$,
the exchange option $V^{ex}\left(t;T,p\cdot S_{2},\alpha\cdot S_{1}\right)$
with risk-free asset $S_{1}$ reduces to a standard vanilla call option
written on $p\cdot S_{2}$ with strike $\alpha\cdot S_{1}(T)=\alpha\cdot V_{0}$,
i.e.%
\footnote{See \citet{Margrabe1978}.%
} 
\[
V^{ex}\left(t;T,p\cdot S_{2},\alpha\cdot S_{1}\right)=Call\left(t;T,\alpha\cdot V_{0},r,\sigma_{2},p\cdot S_{2}\right).
\]
The OBPP strategy with level of performance participation $\alpha$
then represents a standard OBPI strategy with respect to the deterministic
insurance level $\alpha^{PI}=\alpha$.

In the next section we will elaborate on \citet{BlackPerold1992}'s
idea of a CPPI strategy defined on a portfolio consisting of two risky
assets. Since their original approach to portfolio insurance with
a risky reserve asset is not widely spread in the literature we will
redefine it as a dynamic approach to the more general class of performance
participation strategies. Furthermore, the name will be adapted to
cope with the more general performance participation feature.

\subsection{The Constant Proportion Performance Participation (CPPP) strategy\label{sec:CPPP}}

Similar to the OBPP strategy the Constant Proportion Performance Participation
(CPPP) strategy aims at providing a minimum return participation in
the reserve asset $S_{1}$ while benefiting from an outperformance
of the active asset $S_{2}$. This is achieved by applying the CPPI
investment rules to a portfolio consisting of two risky assets. In
contrast to the OBPP strategy the CPPP represents a dynamic strategy
since the portfolio is continuously reallocated over time. Furthermore,
the applied allocation rules even allow for a leveraged participation
in $S_{2}$.

Let again $\alpha<1$ denote the minimum investor-defined level of
performance participation in the risky reserve asset $S_{1}$ that
defines the portfolio floor $\left(F(t)\right)_{0\leq t\leq T}$,
i.e. 
\[
F(t)=\alpha\cdot S_{1}(t).
\]
This minimum portfolio value has to be achieved not only at the end
of the investment horizon $T$ but at any time $t\in[0,T].$ Furthermore,
we define at time $t\in[0,T]$ the cushion $C$ as the excess portfolio
value with respect to the current floor 
\[
C(t)=\max\left\{ V^{CPPP}(t)-F(t),0\right\} .
\]
Note that the requirement of a positive initial cushion $C_{0}=V_{0}-F_{0}$,
where $V_{0}=S_{1}(0)$, establishes the natural bound $\alpha<1$
on the level of performance participation. In order to ensure the
required floor $F(t)$ at any time $t\in[0,T]$ the basic idea of
the CPPP method now consists in analogy to the standard CPPI strategy
in investing a constant proportion $m>0$ of the cushion $C$ in the
active asset $S_{2}$. This is the reason why we call the strategy
constant proportion performance participation. The remaining part
of the portfolio is invested in the reserve asset $S_{1}$. More precisely,
the exposures $E_{2}$ and $E_{1}$ to the active and the reserve
asset $S_{2}$, $S_{1}$, respectively, at time $t\in[0,T]$ are determined
by 
\begin{align*}
E_{2}(t) & =m\cdot C(t)=m\cdot\max\left\{ V^{CPPP}(t)-F(t),0\right\} ,\\
E_{1}(t) & =V^{CPPP}(t)-E_{2}(t).
\end{align*}
The constant multiplier $m$ affects the participation in the (out)performance
of asset $S_{2}$ and the potential leverage effect with respect to
$S_{1}$. In general, the strategy is well-defined for any $m>0$.
However, we will restrict to the more interesting case $m\geq1$ when
the payoff function is convex in the value of the active asset $S_{2}$.

In their seminal paper \citet{BlackPerold1992} already derive the
value of the CPPP portfolio by establishing a similar relationship
with the standard CPPI strategy as it is the case for OBPP and OBPI
according to Equation (\ref{eq:OBPP_OBPI}). 
\begin{lem}[CPPP and CPPI value]
\label{lem:CPPP_CPPI} Given the financial market $\left(\Omega,\mathcal{F},\mathbb{P},\mathbb{F}\right)$
defined in (\ref{eq:SDE_risky}) and the risky asset $S\in\left\{ S_{1},S_{2}\right\} $.
Furthermore, let $T$ denote the horizon of the desired insurance
level $\alpha^{PI}\leq e^{r\cdot T}$ in terms of the initial endowment
$V_{0}$ of a standard CPPI strategy with multiplier $m$, whose current
portfolio value at time $t\in[0,T]$ is given by%
\footnote{See, e.g., \citet{PeroldSharpe1988}.%
} 
\begin{align}
V^{CPPI}\left(t;T,V_{0},\alpha^{PI},m,r,S\right) & =\alpha^{PI}\cdot V_{0}\cdot e^{-r\cdot(T-t)}\label{eq:value_CPPI}\\
 & +\beta_{CPPI}\left(t;\alpha^{PI},m,r,\sigma_{S}\right)\cdot V_{0}\cdot e^{r\cdot t}\cdot\left(\frac{S(t)}{V_{0}\cdot e^{r\cdot t}}\right)^{m},\nonumber 
\end{align}
with the non-negative function $\beta_{CPPI}\left(t;\alpha^{PI},m,r,\sigma_{S}\right)$
defined as 
\[
\beta_{CPPI}\left(t;\alpha^{PI},m,r,\sigma_{S}\right)=\left(1-\alpha^{PI}\cdot e^{-r\cdot T}\right)\cdot e^{\frac{1}{2}\cdot m\cdot(1-m)\cdot\sigma_{S}^{2}\cdot t}.
\]
 Then, at any time $t\in[0,T]$ the CPPP strategy can be represented
as a portfolio consisting of $S_{1}(t)$ shares of a standard CPPI
strategy in the discounted market with $S_{1}$ as numéraire 
\begin{equation}
V^{CPPP}\left(t;T,V_{0},\alpha,m,S_{1},S_{2}\right)=S_{1}(t)\cdot\hat{V}^{CPPI}\left(t;T,\hat{V}_{0},\alpha,m,\hat{r},\hat{S}_{2}\right),\label{eq:CPPP_CPPI}
\end{equation}
where $\hat{r}=0$ and $\hat{S}_{2}=S_{2}/S_{1}$. All other parameters
remain the same. More precisely, the current CPPP portfolio value
is calculated as 
\begin{align}
V^{CPPP}\left(t;T,V_{0},\alpha,m,S_{1},S_{2}\right) & =F(t)+C(t)\label{eq:value_CPPP}\\
 & =\alpha\cdot S_{1}(t)+\beta_{CPPP}\left(t;\alpha,m,\hat{\sigma}_{2}\right)\cdot S_{1}(t)\cdot\left(\frac{S_{2}(t)}{S_{1}(t)}\right)^{m},\nonumber 
\end{align}
where $\beta_{CPPP}\left(t;\alpha,m,\hat{\sigma}_{2}\right)=\beta_{CPPI}\left(t;\alpha,m,\hat{r},\hat{\sigma}_{2}\right)=\left(1-\alpha\right)\cdot e^{\frac{1}{2}\cdot m\cdot(1-m)\cdot\hat{\sigma}_{2}^{2}\cdot t}$. \end{lem}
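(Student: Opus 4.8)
The plan is to mirror the proof of Lemma~\ref{lem:OBPP_OBPI} and reduce the CPPP in the original market to a \emph{standard} CPPI in the market discounted by the reserve asset $S_{1}$. The conceptual core is the numéraire relationship (\ref{eq:CPPP_CPPI}); once it is established, the closed-form expression (\ref{eq:value_CPPP}) follows by simply substituting the discounted-market parameters into the known CPPI value (\ref{eq:value_CPPI}).

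First I would write the CPPP as a self-financing portfolio in terms of the number of shares held. Since the strategy invests the exposure $E_{2}(t)=m\cdot C(t)$ in the active asset $S_{2}$ and the remainder $E_{1}(t)=V^{CPPP}(t)-E_{2}(t)$ in the reserve asset $S_{1}$, the corresponding share holdings are $\phi_{2}(t)=E_{2}(t)/S_{2}(t)$ and $\phi_{1}(t)=E_{1}(t)/S_{1}(t)$, so that $V^{CPPP}=\phi_{1}S_{1}+\phi_{2}S_{2}$ with $dV^{CPPP}=\phi_{1}dS_{1}+\phi_{2}dS_{2}$. Passing to $S_{1}$ as numéraire and writing $\hat{V}=V^{CPPP}/S_{1}$, $\hat{S}_{1}=S_{1}/S_{1}\equiv1$ and $\hat{S}_{2}=S_{2}/S_{1}$, the numéraire-invariance of self-financing strategies (see, e.g., \citet{Shreve2008}) yields $d\hat{V}=\phi_{1}\,d\hat{S}_{1}+\phi_{2}\,d\hat{S}_{2}=\phi_{2}\,d\hat{S}_{2}$, since $\hat{S}_{1}$ is constant. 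The holdings $\phi_{1},\phi_{2}$ are unchanged by the change of unit.

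The second step is to check that, expressed in the discounted market, the allocation rule is exactly the standard CPPI rule. The discounted floor is $\hat{F}=F/S_{1}=\alpha$, which is \emph{constant} in time and thus coincides with a deterministic CPPI floor $\alpha\cdot\hat{V}_{0}\cdot e^{-\hat{r}(T-t)}$ carrying the riskless rate $\hat{r}=0$ and initial value $\hat{V}_{0}=V_{0}/S_{1}(0)=1$. The discounted cushion is $\hat{C}=C/S_{1}=\hat{V}-\alpha$, and the $\hat{S}_{1}$-denominated amount invested in the discounted risky asset $\hat{S}_{2}$ equals $\phi_{2}\hat{S}_{2}=E_{2}/S_{1}=m\,C/S_{1}=m\,\hat{C}$. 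Hence in the discounted universe the strategy holds a constant multiple $m$ of the cushion in $\hat{S}_{2}$ and the rest in the riskless asset $\hat{S}_{1}$, which is precisely the CPPI rule with parameters $\hat{r}=0$, volatility $\hat{\sigma}_{2}$ (the diffusion of $\hat{S}_{2}$ from Remark~\ref{rmk:SDE_S_2_hat}), insurance level $\alpha$ and multiplier $m$. This establishes $\hat{V}=\hat{V}^{CPPI}(t;T,1,\alpha,m,0,\hat{S}_{2})$ and therefore (\ref{eq:CPPP_CPPI}) upon multiplying by $S_{1}(t)$. For the explicit formula I substitute $\hat{r}=0$, $\hat{V}_{0}=1$, $\sigma_{S}=\hat{\sigma}_{2}$ and $S=\hat{S}_{2}$ into (\ref{eq:value_CPPI}); this collapses the discount factors and the $e^{\hat{r}t}$ terms, leaving $\hat{V}^{CPPI}=\alpha+\beta_{CPPP}(t)\cdot\hat{S}_{2}(t)^{m}$ with $\beta_{CPPP}=(1-\alpha)\,e^{\frac{1}{2}m(1-m)\hat{\sigma}_{2}^{2}t}$, and multiplying back by $S_{1}(t)$ produces (\ref{eq:value_CPPP}).

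The step I expect to be the main obstacle is the rigorous justification of the reduction in the third paragraph: namely, that self-financing and the \emph{proportional} exposure rule survive the change of numéraire in the precise form ``dollar exposure $m\,C$ to $S_{2}$'' $\mapsto$ ``$S_{1}$-exposure $m\,\hat{C}$ to $\hat{S}_{2}$''. One must also verify that the cushion stays strictly positive, so that the $\max\{\cdot,0\}$ in the definition of $C(t)$ is never binding; this follows because, as in the standard CPPI, the discounted cushion solves a linear SDE and is therefore a geometric Brownian motion started at $\hat{C}_{0}=1-\alpha>0$, which does not reach zero in finite time. Given these two points, the pathwise dependence of the CPPI value on $S(t)$ alone (the drift cancelling against the stochastic term once $W$ is expressed through $S$) guarantees that the substitution in the final step is legitimate and that (\ref{eq:value_CPPP}) holds for all $t\in[0,T]$.
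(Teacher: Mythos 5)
Your proposal is correct and follows essentially the same route as the paper: it invokes numéraire invariance of the self-financing CPPP allocation, shows that the share holdings re-expressed in the $S_{1}$-discounted market coincide with the standard CPPI rule with $\hat{r}=0$, floor $\alpha$, and risky asset $\hat{S}_{2}$, and then obtains (\ref{eq:value_CPPP}) by substituting $\hat{V}_{0}=1$, $\hat{r}=0$, $\hat{\sigma}_{2}$ and $\hat{S}_{2}$ into (\ref{eq:value_CPPI}). The extra details you supply (the explicit identification of the discounted floor and cushion, and the positivity of the cushion as a geometric Brownian motion) merely flesh out steps the paper leaves implicit.
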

\begin{proof}
Recall that a change of numéraire does not affect the underlying self-financing
CPPP investment rule.%
\footnote{See, e.g., \citet{Shreve2008}.%
} Thus, the number of shares $\varphi_{i}(t)$ allocated of asset $S_{i}$,
$i=1,2$ at time $t\in[0,T]$ in the original (denoted by $V^{CPPP}(t)$)
and the discounted CPPP portfolio%
\footnote{Note that for clearness we sometimes omit the detailed declaration
of all parameters of the performance participation strategy PP and
simply denote the current portfolio value by $V^{PP}$.%
} (denoted by $\hat{V}^{CPPP}(t)$) are the same and yield 
\[
\varphi_{1}(t)=\frac{V^{CPPP}(t)-m\cdot\left(V^{CPPP}(t)-F(t)\right)}{S_{1}(t)}\ \ \text{and}\ \ \varphi_{2}(t)=\frac{m\cdot\left(V^{CPPP}(t)-F(t)\right)}{S_{2}(t)}.
\]
This can be further transformed to 
\[
\varphi_{1}(t)=\frac{\hat{V}^{CPPP}(t)-m\cdot\left(\hat{V}^{CPPP}(t)-\alpha\cdot1\right)}{1}\ \ \text{and}\ \ \varphi_{2}(t)=\frac{m\cdot\left(\hat{V}^{CPPP}(t)-\alpha\cdot1\right)}{\hat{S}_{2}(t)}.
\]
which actually represents a standard CPPI strategy with respect to
the risk-free interest rate $\hat{r}$ and the index ratio $\hat{S}_{2}$.%
\footnote{Note that this relationship was already stated in \citet{BlackPerold1992}.%
} Equation (\ref{eq:value_CPPP}) then follows directly from (\ref{eq:CPPP_CPPI})
by substituting $\alpha^{PI}=\alpha$, $\hat{V}_{0}=1$, $\hat{r}=0$,
$\hat{\sigma}_{2}$ and $\hat{S}_{2}$ in (\ref{eq:value_CPPI}). \end{proof}
\begin{rem}[Cushion dynamics]
The cushion process $C$ of the CPPP is lognormal and given by 
\begin{equation}
dC(t)=C(t)\mu_{C}dt+C(t)\sigma_{C}dW_{C}(t),\label{eq:SDE_C}
\end{equation}
with mean rate of return and volatility 
\begin{align}
\mu_{C} & =\mu_{1}+m\cdot\left(\mu_{2}-\mu_{1}\right),\label{eq:mu_C}\\
\sigma_{C}^{2} & =(1-m)^{2}\cdot\sigma_{1}^{2}+2\cdot(1-m)\cdot m\cdot\rho_{12}\cdot\sigma_{1}\cdot\sigma_{2}+m^{2}\cdot\sigma_{2}^{2}.\label{eq:sigma_C}
\end{align}
\end{rem}
\begin{proof}
The stochastic dynamics of $C$ follow by application of Itô's lemma
and the one-dimensio\-nal Lévy theorem.
\end{proof}
Hence, similar to the OBPP, the additional source of risk in terms
of a risky reserve asset manifests itself as stochastic numéraire
that allows to reduce the newly introduced performance participation
strategy to its portfolio insurance equivalent in the discounted asset
universe. In the sequel the derived relationships (\ref{eq:OBPP_OBPI})
and (\ref{eq:CPPP_CPPI}) will be very useful for the analysis of
the characteristics of the two performance participation strategies.
Especially with respect to the moments of the resulting payoff distributions
as well as the dynamic behavior it allows to perform most of the examinations
in terms of the standard portfolio insurance strategies in the reduced
discounted market framework. The main benefit being that the latter
strategies have already been extensively studied from an analytical
point of view.%
\footnote{See, e.g., \citet{BlackRouhani1989}, \citet{BlackPerold1992}, \citet{BertrandPrigent2005}
or \citet{ZagstKraus2009}.%
}

Equation (\ref{eq:value_CPPP}) represents the basic properties of
the CPPP. At any time $t$ the value of the strategy consists of the
current value of the guarantee $F(t)$ and the strictly positive cushion
$C(t)$ which is proportional to $S_{1}$ and $(S_{2}/S_{1})^{m}$.
Thus, the CPPP value always lies strictly above the dynamically insured
floor $F(t)$. Furthermore, the CPPP value process is path independent.

In contrast to the OBPP approach the CPPP includes an additional degree
of freedom which is introduced by the multiplier $m$. The payoff
above the stochastic guarantee, i.e. the cushion, is linear in $S_{2}$
for $m=1$ and it is convex in $S_{2}$ (and $S_{2}/S_{1}$) for $m>1$.
In the latter case the resulting exposure to the active asset $S_{2}$
is likely to exceed the actual portfolio value. This is due to the
leveraging effect associated with $m$. The exposure to asset $S_{2}$
is then financed by short-selling the reserve asset $S_{1}$.

Note that in the special case when the reserve asset is given by a
zero-coupon bond with face value $V_{0}$, i.e. $S_{1}(t)=V_{0}\cdot e^{-r\cdot(T-t)}$,
the CPPP strategy with level of performance participation $\alpha$
reduces to a standard CPPI strategy with respect to the deterministic
insurance level $\alpha^{PI}=\alpha$.

In what follows we compare the two performance participation strategies
with respect to various criteria including moments as well as the
dynamic behavior.

\section{Comparison of the Payoff Distributions}

In order to compare the two methods we retain the assumption that
the initial portfolio values are the same and equal the initial asset
prices, i.e. 
\[
V_{0}=V^{CPPP}(0)=V^{OBPP}(0)=S_{1}(0)=S_{2}(0).
\]
Furthermore, the two strategies are supposed to provide the same participation
$\alpha<1$ in the performance of the reserve asset $S_{1}$. Hence,
\[
F(t)=\alpha\cdot S_{1}(t),
\]
in the case of the CPPP strategy and the adequate number of shares
$p$ of the OBPP strategy is derived from Condition (\ref{eq:p_cond})
\[
V_{0}=\alpha\cdot S_{1}(0)+V^{ex}\left(0;T,p\cdot S_{2},\alpha\cdot S_{1}\right).
\]
Note that these two conditions do not impose any constraint on the
multiplier $m$ as the second parameter of the CPPP strategy. In what
follows, this leads us to consider CPPP strategies for various values
of the multiplier; Among them the unique value $m^{*}$ which complies
with equality of payoff expectations as an additional condition (see
Section \ref{sec:equal_payoffExpectations} for details). We start
with the analysis of the payoff functions of both strategies.

\subsection{Comparison of the payoff functions}

In the simplest case one of the payoff functions of the two methods
would statewisely dominate the other one. More precisely, this implies
that one of the portfolio values always lies above the other one for
all terminal values $S_{1}(T)$, $S_{2}(T)$. However, since $V_{0}=V^{CPPP}(0)=V^{OBPP}(0)$
and due to the absence of arbitrage this is not possible.%
\footnote{See \citet{BlackRouhani1989}.%
} 
\begin{lem}
Neither of the two payoffs is greater than the other one for all terminal
values $S_{1}(T)$, $S_{2}(T)$ of the underlying risky assets. The
two payoff functions thus intersect one another.\end{lem}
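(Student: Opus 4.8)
The plan is to prove this by contradiction, exploiting the no-arbitrage assumption together with the equality of initial values $V^{CPPP}(0)=V^{OBPP}(0)=V_0$. Suppose, for contradiction, that one payoff dominates the other statewise at maturity, say $V^{CPPP}(T;\cdot)\geq V^{OBPP}(T;\cdot)$ for all terminal asset values $S_1(T),S_2(T)>0$. Because the market is complete and arbitrage-free, I would invoke risk-neutral valuation: both terminal payoffs are attainable, so their time-$0$ prices are given by the discounted expectation under an equivalent martingale measure. Since both strategies are self-financing with identical initial cost $V_0$, their terminal payoffs must have the same risk-neutral price.

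The key observation is that if $V^{CPPP}(T)\geq V^{OBPP}(T)$ pointwise, then the difference $V^{CPPP}(T)-V^{OBPP}(T)\geq 0$ is a nonnegative random variable with zero price, which forces it to vanish almost surely (otherwise a costless strategy with a nonnegative, sometimes-positive payoff would constitute an arbitrage). So the two payoffs would have to coincide on a set of full measure, i.e.\ for (Lebesgue-)almost all terminal values $(S_1(T),S_2(T))$ in the positive quadrant. I would then rule this out by comparing the explicit functional forms. From Equation (\ref{eq:payoff_OBPP}), the OBPP terminal payoff is $\max\{\alpha S_1(T),\,p S_2(T)\}$, which is piecewise linear in $(S_1(T),S_2(T))$. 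From Equation (\ref{eq:value_CPPP}) evaluated at $t=T$ (where $\beta_{CPPP}(T;\alpha,m,\hat\sigma_2)=(1-\alpha)e^{\frac12 m(1-m)\hat\sigma_2^2 T}$), the CPPP terminal payoff is $\alpha S_1(T)+\beta\cdot S_1(T)(S_2(T)/S_1(T))^m$, which is a strictly convex (for $m>1$) or linear (for $m=1$) function of $S_2(T)/S_1(T)$ and in general not piecewise linear. Two such distinct functions cannot agree on a full-measure set, giving the contradiction.

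A cleaner route that avoids measure-theoretic subtleties is a direct asymptotic/endpoint comparison. Using homogeneity, I would reduce both payoffs to functions of the single ratio $x=S_2(T)/S_1(T)$ after dividing by $S_1(T)$: the OBPP becomes $\max\{\alpha, p x\}$ and the CPPP becomes $\alpha+\beta x^m$. I would then examine the behavior as $x\to 0^+$ and as $x\to\infty$. As $x\to 0^+$, the OBPP tends to $\alpha$ while the CPPP tends to $\alpha$ as well (since $\beta x^m\to 0$), but the CPPP stays strictly above $\alpha$ for every $x>0$ because $\beta>0$, so CPPP $>$ OBPP there. As $x\to\infty$, the OBPP grows linearly like $p x$ whereas for $m>1$ the CPPP grows like $\beta x^m$, which eventually overtakes any linear function; but for the linear case $m=1$ one must instead compare the slopes $p$ versus $\beta=1-\alpha$. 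In either regime I can exhibit one region where CPPP exceeds OBPP and another where the reverse holds, so neither dominates and the curves must cross.

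The main obstacle I anticipate is the borderline linear case $m=1$, where both reduced payoffs are piecewise linear in $x$ and the comparison reduces to comparing the slope $1-\alpha$ of the CPPP against the slope $p$ of the OBPP on the region $\{px>\alpha\}$. Here I cannot rely on the convexity argument and must instead lean on the no-arbitrage/equal-initial-value constraint: if one payoff dominated the other everywhere, the pricing argument above would force equality, yet the CPPP is strictly above $\alpha$ at $x=0$ while the OBPP equals $\alpha$ there, contradicting equality. This shows that even when $m=1$ the two payoffs must intersect, completing the argument. The remaining case $0<m<1$ is explicitly excluded by the paper's restriction to $m\geq 1$, so it need not be treated.
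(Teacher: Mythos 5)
Your first argument is correct and complete, but it takes a genuinely different route from the paper. The paper's proof is a one-line reduction: via the numéraire relationships (\ref{eq:OBPP_OBPI}) and (\ref{eq:CPPP_CPPI}) the claim becomes the analogous statement for the standard OBPI and CPPI in the discounted market, which is then cited from the literature (Zagst--Kraus 2009); the no-arbitrage reasoning you use appears in the paper only as an informal remark in the surrounding text. You instead give a self-contained proof: equal initial cost plus statewise dominance forces the nonnegative payoff difference to have zero risk-neutral price, hence to vanish almost surely; since $(S_1(T),S_2(T))$ has full support on the positive quadrant, the two payoff functions would have to agree everywhere, which fails because on the positive-probability region $\{p\,S_2(T)<\alpha\,S_1(T)\}$ the OBPP payoff equals $\alpha S_1(T)$ exactly while the CPPP payoff equals $\alpha S_1(T)+\beta S_1(T)\left(S_2(T)/S_1(T)\right)^{m}>\alpha S_1(T)$. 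What your approach buys is independence from the external reference and an argument that works uniformly in $m\geq 1$; what the paper's approach buys is brevity and consistency with its general strategy of pushing everything through the discounted-market correspondence.

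One caveat on your proposed ``cleaner route'': as written it does not close the argument. The endpoint comparison only ever exhibits regions where the CPPP exceeds the OBPP (near $x=0$, and again near $x=\infty$ when $m>1$); it never produces a region where the OBPP exceeds the CPPP, so the assertion that ``in either regime I can exhibit \ldots another where the reverse holds'' is unsupported by the computations preceding it. The existence of such a region is exactly what the equal-price/no-arbitrage argument supplies, so the ``cleaner route'' is not actually independent of the first one. Since you present it only as an alternative and your primary argument stands on its own, this does not invalidate the proof, but you should either drop the second route or state explicitly that it establishes only one direction of the non-dominance.
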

\begin{proof}
The proposition follows together with Equation (\ref{eq:OBPP_OBPI})
and (\ref{eq:CPPP_CPPI}) from the analog relationship with respect
to the standard OBPI and CPPI strategy which was shown in \citet{ZagstKraus2009}.
\end{proof}
Figure \ref{fig:statewiseDominance} illustrates this finding using
a simple numerical example with typical values for the financial market
presented in Table \ref{tab:StandardParameterSet}.%
\footnote{The asset characteristics were obtained from monthly return data of
the JP Morgan EMU Government Bond Index and the Dow Jones Eurostoxx
50 Index over the time period 01/1995-10/2009.%
} Furthermore, the adequate number of shares and exchange options $p$
corresponding to the initial endowment $V_{0}=100$ and an investor-defined
level of performance participation $\alpha=0.95$ are provided. If
not mentioned otherwise, now and in the following we will consider
this setting as our reference model scenario for numerical calculations.
\begin{table}[htbp]
\centering %
\begin{tabular}{l|cc||l|c}
Market parameters  & Reserve asset $S_{1}$  & Active asset $S_{2}$  & Strategy parameters  & \tabularnewline
\hline 
\hline 
$\mu_{i}$  & 6.6\%  & 9.7\%  & $V_{0}$  & 100 \tabularnewline
$\sigma_{i}$  & 3.7\%  & 21.4\%  & $T$  & 1 (year) \tabularnewline
$\rho_{12}$  & \multicolumn{2}{c||}{-0.15} & $\alpha$  & 0.95 \tabularnewline
$\hat{\sigma}_{2}$  & \multicolumn{2}{c||}{22.3\%} & $p$  & 0.8780 \tabularnewline
\end{tabular}\caption{Standard parameter set for the financial market as well as the two
performance participation strategies under consideration.}

\label{tab:StandardParameterSet} 
\end{table}

\begin{figure}[htbp]
\centering \includegraphics[width=0.7\textwidth]{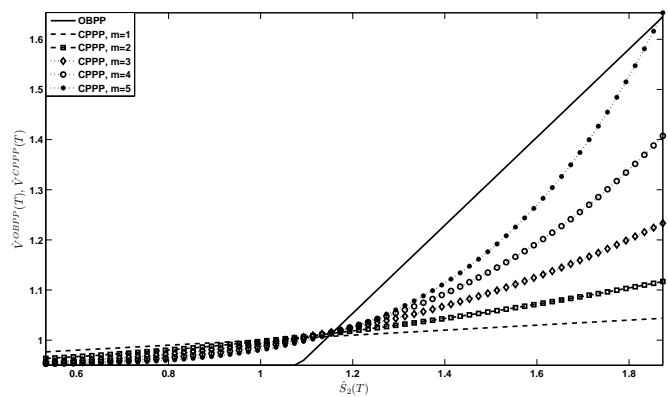}
\caption{CPPP and OBPP payoffs as functions of $\hat{S}_{2}(T)$ according
to (\ref{eq:CPPP_CPPI}) and (\ref{eq:OBPP_OBPI}) for the standard
parameter set provided in Table \ref{tab:StandardParameterSet} and
$m=1,2,...,5$.}

\label{fig:statewiseDominance} 
\end{figure}

The graph visualizes the strategy payoffs according to (\ref{eq:OBPP_OBPI})
and (\ref{eq:CPPP_CPPI}) as functions of the terminal index ratio
$\hat{S}_{2}(T)$, i.e. the standard OBPI and CPPI in the discounted
market. With respect to the CPPP strategy different values of the
multiplier $m=1,...,5$ are analyzed. For each value of the multiplier
the payoffs intersect at least once. The CPPP payoff exceeds the OBPP
one not only for very large values of the index ratio but also for
the more important range of moderate outperformance and even underperformance
of the active asset with respect to the reserve asset. For $m=1$
it is a linear and for $m>1$ an exponential function of $\hat{S}_{2}$.
As the value of the multiplier increases, the CPPP portfolio value
becomes more convex in $\hat{S}_{2}$. In contrast, the OBPP payoff
is a (piecewise) linear function of the terminal index ratio.

The examination of the terminal performances is only a first step
within the scope of a comparison of the two strategies. However, a
detailed analysis must also take into account the entire payoff distributions
- including the probabilities of bullish and bearish markets. In the
sequel we will thus derive explicit formulas for the moments of the
resulting distributions. This enables us to extend the analysis especially
to the first four moments.

\subsection{Comparison of the moments of the payoff distributions}

\subsubsection{Moments of the CPPP and the OBPP strategy}

\label{sec:Moments}

To derive explicit formulas for the moments of the payoff distributions
of the OBPP and the CPPP we will make use of the similarity of performance
participation and portfolio insurance strategies according to (\ref{eq:OBPP_OBPI})
and (\ref{eq:CPPP_CPPI}). As a byproduct we obtain general formulas
for the moments of the standard OBPI and CPPI, too.
\begin{lem}
\label{lem:kth_Moment_PP}Let $V^{PP}(t)$ denote the portfolio value
of the OBPP or the CPPP strategy at time $t\in[0,T]$ and $\hat{V}^{PI}(t)$
the respective value of the corresponding portfolio insurance strategy
in the discounted market according to (\ref{eq:OBPP_OBPI}) and (\ref{eq:CPPP_CPPI}).
Then, the $k$th moment $m_{k}\left(V^{PP}(t)\right)$, $k\in\mathbb{N}$
of the performance participation strategy PP with respect to the real-world
measure $\mathbb{P}$ can be calculated as 
\begin{equation}
m_{k}\left(V^{PP}(t)\right)=\mathbb{E}_{\mathbb{P}}\left[S_{1}(t)^{k}\right]\cdot\tilde{m}_{k}\left(\hat{V}^{PI}(t)\right),\label{eq:kth_Moment_PP_PI}
\end{equation}
where $\tilde{m}_{k}\left(\hat{V}^{PI}(t)\right)$ denotes the $k$th
moment of the associated portfolio insurance strategy with respect
to the equivalent probability measure $\tilde{\mathbb{P}}_{k}$ defined
by the Radon-Nikodym derivative%
\footnote{See, e.g., \citet{Shreve2008}.%
} 
\begin{equation}
\begin{array}{cc}
\left.\frac{d\tilde{\mathbb{P}}_{k}}{d\mathbb{P}}\right|_{\mathcal{F}_{t}}=\tilde{Z}_{k}(t), & \tilde{Z}_{k}(t)=\frac{S_{1}(t)^{k}}{\mathbb{E}_{\mathbb{P}}\left[S_{1}(t)^{k}\right]}=exp\left\{ k\cdot\sigma_{1}\cdot W_{1}(t)-\frac{1}{2}\cdot k^{2}\cdot\sigma_{1}^{2}\cdot t\right\} ,\end{array}\label{eq:P_k_tilde}
\end{equation}
and 
\begin{equation}
\mathbb{E}_{\mathbb{P}}\left[S_{1}(t)^{k}\right]=S_{1}(0)^{k}\cdot e^{k\cdot\mu_{1}\cdot t+\frac{1}{2}\cdot k\cdot(k-1)\cdot\sigma_{1}^{2}\cdot t}.\label{eq:ew_S_1_k_P}
\end{equation}
\end{lem}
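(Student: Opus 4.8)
The plan is to exploit the numéraire factorisations (\ref{eq:OBPP_OBPI}) and (\ref{eq:CPPP_CPPI}), which in both cases read $V^{PP}(t)=S_{1}(t)\cdot\hat{V}^{PI}(t)$, and then to absorb the factor $S_{1}(t)^{k}$ into a change of measure. First I would raise this identity to the $k$th power, obtaining $V^{PP}(t)^{k}=S_{1}(t)^{k}\cdot\hat{V}^{PI}(t)^{k}$ pointwise, and take the $\mathbb{P}$-expectation, so that $m_{k}\left(V^{PP}(t)\right)=\mathbb{E}_{\mathbb{P}}\left[S_{1}(t)^{k}\cdot\hat{V}^{PI}(t)^{k}\right]$. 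Since both $S_{1}(t)$ and $\hat{V}^{PI}(t)$ are $\mathcal{F}_{t}$-measurable, no conditioning issues arise and the product is a genuine scalar product. This step uses nothing beyond the earlier lemmas and is where the entire reduction to the discounted market is encoded.

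The key step is to multiply and divide by the deterministic constant $\mathbb{E}_{\mathbb{P}}\left[S_{1}(t)^{k}\right]$, writing $S_{1}(t)^{k}=\mathbb{E}_{\mathbb{P}}\left[S_{1}(t)^{k}\right]\cdot\tilde{Z}_{k}(t)$ with $\tilde{Z}_{k}(t)$ as in (\ref{eq:P_k_tilde}). I would then verify that $\tilde{Z}_{k}(t)$ is a legitimate Radon--Nikodym density, i.e. strictly positive with $\mathbb{E}_{\mathbb{P}}\left[\tilde{Z}_{k}(t)\right]=1$. Because the reserve asset is driven by $W_{1}$ alone, the SDE (\ref{eq:SDE_risky}) gives $S_{1}(t)=S_{1}(0)\exp\left\{ (\mu_{1}-\tfrac{1}{2}\sigma_{1}^{2})t+\sigma_{1}W_{1}(t)\right\}$; a routine lognormal computation then yields the moment formula (\ref{eq:ew_S_1_k_P}) and, upon substitution, the Doléans--Dade exponential form $\tilde{Z}_{k}(t)=\exp\left\{ k\sigma_{1}W_{1}(t)-\tfrac{1}{2}k^{2}\sigma_{1}^{2}t\right\}$, which is manifestly a positive $\mathbb{P}$-martingale of unit mean. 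This legitimises defining the equivalent measure $\tilde{\mathbb{P}}_{k}$ through $\tilde{Z}_{k}(t)$ on $\mathcal{F}_{t}$.

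Finally I would invoke the abstract change-of-measure (Bayes) identity: for any $\mathcal{F}_{t}$-measurable integrable $X$ one has $\mathbb{E}_{\mathbb{P}}\left[\tilde{Z}_{k}(t)\cdot X\right]=\mathbb{E}_{\tilde{\mathbb{P}}_{k}}\left[X\right]$. Taking $X=\hat{V}^{PI}(t)^{k}$ collapses the expectation into $\mathbb{E}_{\mathbb{P}}\left[S_{1}(t)^{k}\right]\cdot\mathbb{E}_{\tilde{\mathbb{P}}_{k}}\left[\hat{V}^{PI}(t)^{k}\right]=\mathbb{E}_{\mathbb{P}}\left[S_{1}(t)^{k}\right]\cdot\tilde{m}_{k}\left(\hat{V}^{PI}(t)\right)$, which is precisely (\ref{eq:kth_Moment_PP_PI}). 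The argument is uniform in the choice of $PP\in\{\text{OBPP},\text{CPPP}\}$, since only the factorisation $V^{PP}=S_{1}\hat{V}^{PI}$ is used and never the specific functional form of $\hat{V}^{PI}$.

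The main obstacle is bookkeeping rather than conceptual: I must track that the change of measure acts solely through $W_{1}$, so that it genuinely reshapes the law of the index ratio $\hat{S}_{2}=S_{2}/S_{1}$ (and hence of $\hat{V}^{PI}$) rather than leaving it invariant. This is exactly why the moment on the right-hand side must be taken under $\tilde{\mathbb{P}}_{k}$ and not under $\mathbb{P}$. Pinning down the induced drift shift of $\hat{S}_{2}$ under each $\tilde{\mathbb{P}}_{k}$ via Girsanov is what will eventually make the right-hand side explicitly computable, but that evaluation belongs to the subsequent lemmas; here it suffices to record the factorisation identity (\ref{eq:kth_Moment_PP_PI}) together with the density (\ref{eq:P_k_tilde}) and the moment formula (\ref{eq:ew_S_1_k_P}).
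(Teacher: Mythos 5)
Your proposal is correct and follows essentially the same route as the paper's proof: factorise $V^{PP}(t)=S_{1}(t)\cdot\hat{V}^{PI}(t)$ via (\ref{eq:OBPP_OBPI}) and (\ref{eq:CPPP_CPPI}), normalise $S_{1}(t)^{k}$ into the Radon--Nikodym density $\tilde{Z}_{k}(t)$, and apply the Bayes/change-of-measure identity to pull out $\mathbb{E}_{\mathbb{P}}\left[S_{1}(t)^{k}\right]$. Your explicit verification that $\tilde{Z}_{k}$ is a positive unit-mean exponential martingale is a small addition the paper leaves implicit, but the argument is otherwise the same.
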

\begin{proof}
The proof is given in \ref{pf:kth_Moment_PP}.
\end{proof}
Thus, similar to the portfolio values themselves the moments of the
payoff distributions of the performance participation strategies are
directly linked to the moments of the corresponding portfolio insurance
strategies in the discounted market. However, an additional change
of probability measure has to be conducted.

In the following, we generally derive the $k$th moments of the payoffs
of a standard OBPI and CPPI strategy. Note that the calculation of
the expected value as well as the variance was e.g. already proceeded
in \citet{BertrandPrigent2005} (expectation only) or \citet{ZagstKraus2009}.
We start with the CPPI. 
\begin{prop}[CPPI moments]
\label{thm:moments_CPPI} The $k$th moment, $k\in\mathbb{N}$, of
a standard CPPI portfolio with level of insurance $\alpha^{PI}\leq e^{r\cdot T}$
and multiplier $m$ at any time $t\in[0,T]$ is given by 
\begin{equation}
m_{k}\left(V^{CPPI}(t)\right)=\left(\alpha^{PI}\cdot V_{0}\right)^{k}\cdot\sum_{i=0}^{k}{k \choose i}\cdot e^{-k\cdot r\cdot(T-t)}\cdot\left(\frac{1-\alpha^{PI}\cdot e^{-r\cdot T}}{\alpha^{PI}\cdot e^{-r\cdot T}}\right)^{i}\cdot e^{i\cdot m\cdot\left[\mu_{S}-r+\frac{1}{2}\cdot(i-1)\cdot m\cdot\sigma_{S}^{2}\right]\cdot t}.\label{eq:m_k_CPPI}
\end{equation}
\end{prop}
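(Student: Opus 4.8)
The plan is to compute the $k$th moment of the CPPI portfolio value directly from its closed-form expression \eqref{eq:value_CPPI}. Writing
\[
V^{CPPI}(t)=\alpha^{PI}\cdot V_0\cdot e^{-r\cdot(T-t)}+\beta_{CPPI}(t)\cdot V_0\cdot e^{r\cdot t}\cdot\left(\frac{S(t)}{V_0\cdot e^{r\cdot t}}\right)^{m},
\]
I would first expand the $k$th power via the binomial theorem, since the value is a sum of a deterministic term and a single stochastic term. This gives
\[
m_k\left(V^{CPPI}(t)\right)=\sum_{i=0}^{k}\binom{k}{i}\left(\alpha^{PI}\cdot V_0\cdot e^{-r\cdot(T-t)}\right)^{k-i}\cdot\mathbb{E}_{\mathbb{P}}\left[\left(\beta_{CPPI}(t)\cdot V_0\cdot e^{r\cdot t}\cdot\left(\frac{S(t)}{V_0\cdot e^{r\cdot t}}\right)^{m}\right)^{i}\right],
\]
so the whole problem reduces to computing the $i$th moment of the stochastic cushion term, i.e.\ essentially $\mathbb{E}_{\mathbb{P}}\left[\left(S(t)/(V_0 e^{rt})\right)^{im}\right]$.

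Next I would evaluate that expectation using the lognormality of $S(t)$. Under $\mathbb{P}$ the asset satisfies $S(t)=V_0\exp\{(\mu_S-\tfrac12\sigma_S^2)t+\sigma_S W(t)\}$ (with $S(0)=V_0$ by the normalization $S_i(0)=V_0$), so $\left(S(t)/(V_0 e^{rt})\right)^{im}$ is again lognormal and its expectation follows from the standard moment-generating-function identity $\mathbb{E}[e^{\lambda W(t)}]=e^{\lambda^2 t/2}$. Collecting the drift and the variance contributions, the exponent should organize into $im\cdot[\mu_S-r+\tfrac12(im-1)\sigma_S^2]\cdot t$; in particular the $\beta_{CPPI}(t)^{i}$ prefactor contributes the term $\tfrac12 i\cdot m(1-m)\sigma_S^2 t$, which must combine with the $\tfrac12(im)^2\sigma_S^2 t$ from the lognormal expectation to produce the stated $\tfrac12 i(i-1)m^2\sigma_S^2 t + \tfrac12 i m(m-1)\sigma_S^2\cdot(\text{something})$ structure. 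The main bookkeeping obstacle will be checking that the $\sigma_S^2$ terms coming from $\beta_{CPPI}^{\,i}$ and from the lognormal expectation of $(S/\cdot)^{im}$ collapse exactly into the claimed factor $e^{i\cdot m[\mu_S-r+\frac12(i-1)m\sigma_S^2]t}$ rather than something with a stray linear-in-$m$ piece.

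Finally I would reconcile the prefactors. Factoring $(\alpha^{PI}V_0)^k$ out of the binomial sum, the deterministic factor $(\alpha^{PI}V_0 e^{-r(T-t)})^{k-i}$ together with the $(V_0 e^{rt})^{i}$ and $(\alpha^{PI})$-free pieces from the stochastic term should recombine into the global $e^{-k r(T-t)}$ and the ratio $\left((1-\alpha^{PI}e^{-rT})/(\alpha^{PI}e^{-rT})\right)^{i}$ appearing in \eqref{eq:m_k_CPPI}; here the constant $(1-\alpha^{PI}e^{-rT})$ inside $\beta_{CPPI}$ supplies the numerator. I expect the hard part to be purely this algebraic consolidation of the exponentials of $r$ and the powers of $V_0$ into the compact claimed form, and the verification that the $\sigma_S^2$-terms assemble as above; no genuinely new probabilistic input beyond lognormality is needed, so once the binomial expansion and the lognormal moment are in hand the result follows by matching exponents.
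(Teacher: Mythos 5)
Your proposal is correct and follows essentially the same route as the paper's proof: the paper likewise decomposes $V^{CPPI}(t)=F^{CPPI}(t)+C^{CPPI}(t)$, applies the binomial theorem, and evaluates $m_i\left(C^{CPPI}(t)\right)$ from the lognormality of the cushion (equivalently, of $\left(S(t)/(V_0e^{rt})\right)^{im}$ times the $\beta_{CPPI}^{\,i}$ prefactor). The exponent bookkeeping you flag does work out — the $\tfrac12 im(1-m)\sigma_S^2 t$ from $\beta_{CPPI}^{\,i}$ and the $\tfrac12(im)^2\sigma_S^2 t$ from the lognormal moment combine exactly to $\tfrac12 i(i-1)m^2\sigma_S^2 t$ — so no gap remains.
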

\begin{proof}
The proof is given in \ref{pf:kth_Moment_CPPI}.
\end{proof}
A more sophisticated calculation leads to the following general analytic
expression for the moments of the OBPI payoff distribution. 
\begin{thm}[OBPI moments]
\label{thm:moments_OBPI}
\end{thm}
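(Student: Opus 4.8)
The plan is to compute $m_k(V^{OBPI}(t))=\mathbb{E}_{\mathbb{P}}[(V^{OBPI}(t))^k]$ directly from the closed-form value (\ref{eq:value_OBPI}), exploiting that under $\mathbb{P}$ the underlying $S(t)=S(0)\,e^{(\mu_S-\frac12\sigma_S^2)t+\sigma_S\sqrt{t}\,Z}$ is driven by a single standard normal $Z$. First I would substitute the Black-Scholes call value into (\ref{eq:value_OBPI}) and regroup the terms carrying the discounted strike $\alpha^{PI}V_0e^{-r(T-t)}$, writing the OBPI value as a two-term sum
\[
V^{OBPI}(t)=\alpha^{PI}V_0\,e^{-r(T-t)}\,\Phi(-d_2)+p^{PI}S(t)\,\Phi(d_1),
\]
where $d_1,d_2$ are the usual arguments for strike $\alpha^{PI}V_0$, volatility $\sigma_S$ and rate $r$, both affine in $Z$, say $d_{1,2}=a_{1,2}+c\,Z$ with $c=\sqrt{t/(T-t)}$. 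A binomial expansion of the $k$th power then reduces everything to the $k+1$ mixed expectations
\[
\mathbb{E}_{\mathbb{P}}\bigl[S(t)^i\,\Phi(d_1)^i\,\Phi(-d_2)^{k-i}\bigr],\qquad i=0,\dots,k,
\]
weighted by $\binom{k}{i}(\alpha^{PI}V_0e^{-r(T-t)})^{k-i}(p^{PI})^i$.

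The core step is to linearize the powers of the normal distribution function. Using $\Phi(x)^n=\mathbb{P}(U_1\le x,\dots,U_n\le x)$ for standard normals $U_1,\dots,U_n$ that are i.i.d.\ and independent of $Z$, I would rewrite $\Phi(d_1)^i\Phi(-d_2)^{k-i}$ as the probability, conditional on $Z$, that $k$ auxiliary normals satisfy $U_j-c\,Z\le a_1$ and $U'_l+c\,Z\le -a_2$. Integrating out $Z$ turns the product into the distribution function $\Phi_k$ of a centred $k$-dimensional Gaussian; after normalizing by $\sqrt{1+c^2}=\sqrt{T/(T-t)}$ the resulting correlation matrix has the transparent block form with correlation $t/T$ inside each of the two blocks and $-t/T$ across them. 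Equivalently, since everything factorizes conditionally on $Z$, each mixed expectation can be written as a one-dimensional Gaussian integral of $\Phi(a_1+cz)^i\Phi(-a_2-cz)^{k-i}$.

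The remaining ingredient, and the main obstacle, is the lognormal weight $S(t)^i$, which I would absorb by a Cameron-Martin/Esscher shift: completing the square gives $\mathbb{E}_{\mathbb{P}}[S(t)^i f(Z)]=\mathbb{E}_{\mathbb{P}}[S(t)^i]\cdot\mathbb{E}_{\mathbb{P}}[f(Z+i\sigma_S\sqrt{t})]$ with $\mathbb{E}_{\mathbb{P}}[S(t)^i]$ given by (\ref{eq:ew_S_1_k_P}). This shift leaves the correlation matrix untouched but translates the thresholds $d_1,-d_2$ by a deterministic amount linear in $i$, which is exactly where the real-world drift $\mu_S$ and the Black-Scholes rate $r$ recombine inside the shifted arguments. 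Assembling the $k+1$ terms then yields the claimed closed form as a finite sum over $i$ of the elementary constants above times $\mathbb{E}_{\mathbb{P}}[S(t)^i]$ times a $k$-dimensional normal distribution function $\Phi_k$ evaluated at the shifted, normalized arguments under the block-correlation matrix. The delicate part is purely the bookkeeping of these shifted arguments; the case $k=1$, where $\Phi_1=\Phi$ collapses to a single univariate term, provides a convenient consistency check against the known expected OBPI value.
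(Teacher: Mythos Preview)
You have misread the scope of the theorem. Unlike Proposition~\ref{thm:moments_CPPI} for the CPPI, which is stated ``at any time $t\in[0,T]$'', Theorem~\ref{thm:moments_OBPI} concerns only the terminal payoff, i.e.\ $t=T$. At maturity the Black--Scholes call has collapsed to its intrinsic value, so
\[
V^{OBPI}(T)=\alpha^{PI}V_0+\bigl(p^{PI}S(T)-\alpha^{PI}V_0\bigr)^+,
\]
and there are no $\Phi(d_1)$, $\Phi(-d_2)$ factors in the integrand at all. The paper's proof exploits exactly this: one binomial expansion of $(a+b^+)^k$ reduces the $k$th moment to upper partial moments $UPM_i\bigl(p^{PI}S(T),\alpha^{PI}V_0\bigr)$; a second binomial expansion of $(p^{PI}S(T)-\alpha^{PI}V_0)^i$ on the event $\{p^{PI}S(T)\ge\alpha^{PI}V_0\}$ leaves only the power-option building blocks $\mathbb{E}_{\mathbb{P}}\bigl[S(T)^l\,\mathds{1}_{p^{PI}S(T)\ge\alpha^{PI}V_0}\bigr]$, each of which is a single Esscher shift away from a univariate $\Phi(d_{1,l})$. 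That is precisely why the stated formula involves only one-dimensional normal distribution functions.

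Your proposal instead attacks the genuinely harder problem of $m_k\bigl(V^{OBPI}(t)\bigr)$ for $0\le t<T$, where the portfolio value does carry the factors $\Phi(d_1)$ and $\Phi(-d_2)$. Your linearization via auxiliary normals and the resulting block-correlated $k$-variate distribution function $\Phi_k$ is the right idea for that problem, but it does not produce the formula in the theorem: you would end up with a sum over $i$ of $k$-dimensional $\Phi_k$'s, not the double sum over $(i,l)$ of univariate $\Phi$'s displayed in (\ref{eq:m_k_OBPI}). Moreover your parameter $c=\sqrt{t/(T-t)}$ blows up as $t\uparrow T$, so the construction does not even specialize cleanly to the terminal case. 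In short, the machinery you set up is both unnecessary for the stated result and would deliver a different (and more involved) expression; to prove Theorem~\ref{thm:moments_OBPI} you should work directly with the piecewise-linear terminal payoff.
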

The $k$th moment, $k\in\mathbb{N}$, of the payoff of a standard
OBPI strategy with level of portfolio insurance $\alpha^{PI}\leq e^{r\cdot T}$
at maturity $T$ is given by 
\begin{align}
 & m_{k}\left(V^{OBPI}(T)\right)\label{eq:m_k_OBPI}\\
 & =\left(\alpha^{PI}\cdot V_{0}\right)^{k}+\left(\alpha^{PI}\cdot V_{0}\right)^{k}\cdot\sum_{i=1}^{k}\sum_{l=0}^{i}{k \choose i}\cdot{i \choose l}\cdot(-1)^{i-l}\cdot\left(\frac{p^{PI}\cdot S_{0}}{\alpha^{PI}\cdot V_{0}}\right)^{l}\cdot e^{l\cdot\mu_{S}\cdot T+\frac{1}{2}\cdot l\cdot(l-1)\cdot\sigma_{S}^{2}\cdot T}\cdot\Phi\left(d_{1,l}\right),\nonumber 
\end{align}
where 
\begin{equation}
d_{1,l}=\frac{\ln\left(\frac{p^{PI}S_{0}}{\alpha^{PI}\cdot V_{0}}\right)+\left[\mu_{S}+\left(l-\frac{1}{2}\right)\cdot\sigma_{S}^{2}\right]\cdot T}{\sigma_{S}\cdot\sqrt{T}}.\label{eq:d_1_l}
\end{equation}

\begin{proof}
The proof is provided in \ref{pf:kth_Moment_OBPI}.
\end{proof}
According to Lemma \ref{lem:kth_Moment_PP} the $k$th moments of
the performance participation strategies follow from the $k$th moments
of the corresponding portfolio insurance strategies with respect to
the equivalent probability measure $\tilde{\mathbb{P}}_{k}$ in the
discounted market. The corresponding asset characteristics are provided
in the following remark.
\begin{rem}
The stochastic dynamics of the index ratio $\hat{S}_{2}=S_{2}/S_{1}$
with respect to the equivalent probability measure $\tilde{\mathbb{P}}_{k}$,
$k\in\mathbb{N}$ defined in Equation (\ref{eq:P_k_tilde}) are given
by 
\begin{equation}
d\hat{S}_{2}(t)=\hat{S}_{2}(t)\hat{\mu}_{2,\tilde{k}}dt+\hat{S}_{2}(t)\hat{\sigma}_{2}d\tilde{W}_{\hat{S}_{2}}(t),\ \ \hat{S}_{2}(0)=1,\label{eq:SDE_S_2_hat_P_k_tilde}
\end{equation}
with drift 
\begin{equation}
\hat{\mu}_{2,\tilde{k}}=\hat{\mu}_{2}+k\cdot\left(\rho_{12}\cdot\sigma_{1}\cdot\sigma_{2}-\sigma_{1}^{2}\right)=\left(\mu_{2}-\mu_{1}\right)+(k-1)\cdot\left(\rho_{12}\cdot\sigma_{1}\cdot\sigma_{2}-\sigma_{1}^{2}\right),\label{eq:mu_2_hat_tilde_k}
\end{equation}
diffusion $\hat{\sigma}_{2}$ as defined in Equation (\ref{eq:sigma_2_hat})
and Wiener process 
\begin{equation}
\tilde{W}_{\hat{S}_{2},k}=\frac{\rho_{12}\cdot\sigma_{2}-\sigma_{1}}{\hat{\sigma}_{2}}\cdot\tilde{W}_{1,k}+\frac{\sqrt{1-\rho_{12}^{2}}\cdot\sigma_{2}}{\hat{\sigma}_{2}}\cdot\tilde{W}_{2,k}.\label{eq:W_S_2_hat_tilde_k}
\end{equation}
The risk-free interest rate $\hat{r}=0$ remains the same under the
change of probability measure.\end{rem}
\begin{proof}
The stochastic dynamics of $\hat{S}_{2}$ under the real-world measure
$\mathbb{P}$ are given in Remark \ref{rmk:SDE_S_2_hat}. Then, following
from the Girsanov theorem%
\footnote{See, e.g., \citep{Shreve2008}.%
} the stochastic process $\bm{\tilde{W}_{k}}(t)=\left(\tilde{W}_{1,k}(t),\tilde{W}_{2,k}(t)\right)'$,
$0\leq t\leq T$ defined by 
\begin{align}
\tilde{W}_{1,k}(t) & :=W_{1}(t)-k\cdot\sigma_{1}\cdot t,\label{eq:W_tilde_1_k}\\
\tilde{W}_{2,k}(t) & :=W_{2}(t),\label{eq:W_tilde_2_k}
\end{align}
is a two-dimensional Brownian motion under the equivalent probability
measure $\tilde{\mathbb{P}}_{k}$, $k\in\mathbb{N}$. Substituting
(\ref{eq:W_tilde_1_k}) and (\ref{eq:W_tilde_2_k}) in (\ref{eq:SDE_S_2_hat})
proves the proposition.%
\footnote{Note that following from the Lévy theorem $\tilde{W}_{\hat{S}_{2},k}$
is again a Brownian motion.%
}
\end{proof}
The moments of the CPPP and the OBPP strategy are then finally derived. 
\begin{lem}[CPPP moments]
\label{lem:moments_CPPP} The $k$th moment, $k\in\mathbb{N}$, of
a CPPP portfolio with level of performance participation $\alpha<1$
and multiplier $m$ at any time $t\in[0,T]$ is given by 
\begin{equation}
m_{k}\left(V^{CPPP}\left(t;T,V_{0},\alpha,m,S_{1},S_{2}\right)\right)=\alpha^{k}\cdot\mathbb{E}_{\mathbb{P}}\left[S_{1}(t)^{k}\right]\cdot\sum_{i=0}^{k}{k \choose i}\cdot\left(\frac{1-\alpha}{\alpha}\right)^{i}\cdot e^{i\cdot m\cdot\left[\hat{\mu}_{2,\tilde{k}}+\frac{1}{2}\cdot(i-1)\cdot m\cdot\hat{\sigma}_{2}^{2}\right]\cdot t},\label{eq:m_k_CPPP}
\end{equation}
where $\mathbb{E}_{\mathbb{P}}\left[S_{1}(t)^{k}\right]$, $\hat{\mu}_{2,\tilde{k}}$
and $\hat{\sigma}_{2}^{2}$ as defined above. \end{lem}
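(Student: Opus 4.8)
The plan is to derive (\ref{eq:m_k_CPPP}) by stacking the two structural results already proved: the numéraire-based moment decomposition of Lemma \ref{lem:kth_Moment_PP} and the closed-form CPPI moment formula of Proposition \ref{thm:moments_CPPI}. Lemma \ref{lem:kth_Moment_PP}, applied to the CPPP strategy via the representation (\ref{eq:CPPP_CPPI}), gives
\[
m_{k}\left(V^{CPPP}(t)\right)=\mathbb{E}_{\mathbb{P}}\left[S_{1}(t)^{k}\right]\cdot\tilde{m}_{k}\left(\hat{V}^{CPPI}(t)\right),
\]
where $\hat{V}^{CPPI}$ is the discounted CPPI strategy written on the index ratio $\hat{S}_{2}=S_{2}/S_{1}$ and the moment $\tilde{m}_{k}$ is taken under the equivalent measure $\tilde{\mathbb{P}}_{k}$ of (\ref{eq:P_k_tilde}). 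Since the factor $\mathbb{E}_{\mathbb{P}}[S_{1}(t)^{k}]$ already appears verbatim in the target identity, the whole problem reduces to evaluating the single quantity $\tilde{m}_{k}(\hat{V}^{CPPI}(t))$.

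To compute it I would invoke Proposition \ref{thm:moments_CPPI} in the discounted market. The crucial observation is that the CPPI moment formula depends on its underlying only through the insurance level $\alpha^{PI}$, the multiplier $m$, the interest rate $r$, and the drift $\mu_{S}$ and volatility $\sigma_{S}$ of the risky asset under the measure in which the moment is taken. By Lemma \ref{lem:CPPP_CPPI} these parameters specialise in the $S_{1}$-numéraire world to $\alpha^{PI}=\alpha$, $\hat{V}_{0}=1$, $\hat{r}=0$ and $\sigma_{S}=\hat{\sigma}_{2}$. The one replacement demanding care is the drift: because $\tilde{m}_{k}$ is an expectation under $\tilde{\mathbb{P}}_{k}$ rather than $\mathbb{P}$, the quantity playing the role of $\mu_{S}$ must be the $\tilde{\mathbb{P}}_{k}$-drift of $\hat{S}_{2}$, i.e.\ $\hat{\mu}_{2,\tilde{k}}$ from the preceding Remark. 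Under these substitutions one has $e^{-k\cdot r\cdot(T-t)}\to 1$, $\alpha^{PI}\cdot e^{-rT}\to\alpha$ and $\mu_{S}-r\to\hat{\mu}_{2,\tilde{k}}$, so (\ref{eq:m_k_CPPI}) collapses directly to
\[
\tilde{m}_{k}\left(\hat{V}^{CPPI}(t)\right)=\alpha^{k}\cdot\sum_{i=0}^{k}{k \choose i}\cdot\left(\frac{1-\alpha}{\alpha}\right)^{i}\cdot e^{i\cdot m\cdot\left[\hat{\mu}_{2,\tilde{k}}+\frac{1}{2}\cdot(i-1)\cdot m\cdot\hat{\sigma}_{2}^{2}\right]\cdot t}.
\]
Multiplying by $\mathbb{E}_{\mathbb{P}}[S_{1}(t)^{k}]$ then reproduces (\ref{eq:m_k_CPPP}) exactly. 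Alternatively, one could bypass Proposition \ref{thm:moments_CPPI} and expand $(\hat{V}^{CPPI}(t))^{k}=(\alpha+(1-\alpha)e^{\frac{1}{2}m(1-m)\hat{\sigma}_{2}^{2}t}\hat{S}_{2}(t)^{m})^{k}$ by the binomial theorem, using that $\mathbb{E}_{\tilde{\mathbb{P}}_{k}}[\hat{S}_{2}(t)^{im}]=e^{im\hat{\mu}_{2,\tilde{k}}t+\frac{1}{2}(im)^{2}\hat{\sigma}_{2}^{2}t}$; the exponents combine into $i\cdot m\cdot[\hat{\mu}_{2,\tilde{k}}+\frac{1}{2}(i-1)m\hat{\sigma}_{2}^{2}]t$, giving the same answer.

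The computation is essentially algebraic once the three earlier results are assembled, so the only genuine difficulty is conceptual rather than technical: one must justify that Proposition \ref{thm:moments_CPPI} may be applied under the tilted measure $\tilde{\mathbb{P}}_{k}$ with $\hat{\mu}_{2,\tilde{k}}$ inserted for the drift. This is legitimate because the derivation behind (\ref{eq:m_k_CPPI}) uses nothing about the underlying beyond its geometric-Brownian-motion structure, and the preceding Remark establishes that $\hat{S}_{2}$ remains lognormal under $\tilde{\mathbb{P}}_{k}$ with drift $\hat{\mu}_{2,\tilde{k}}$ and unchanged volatility $\hat{\sigma}_{2}$. Hence the formula transports verbatim to this measure, and the stated expression for $m_{k}(V^{CPPP}(t))$ follows.
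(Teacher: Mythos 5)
Your main argument is exactly the paper's proof: apply (\ref{eq:kth_Moment_PP_PI}) to the representation (\ref{eq:CPPP_CPPI}) and substitute $\hat{V}_{0}=1$, $\hat{r}=0$, the drift $\hat{\mu}_{2,\tilde{k}}$ and the volatility $\hat{\sigma}_{2}$ into the CPPI moment formula (\ref{eq:m_k_CPPI}), which is precisely what the paper does. The only flaw is a slip in your optional aside: since $\hat{\mu}_{2,\tilde{k}}$ is the SDE drift (not the drift of the log), the lognormal moment is $\mathbb{E}_{\tilde{\mathbb{P}}_{k}}\left[\hat{S}_{2}(t)^{im}\right]=e^{i\cdot m\cdot\hat{\mu}_{2,\tilde{k}}\cdot t+\frac{1}{2}\cdot i\cdot m\cdot\left(i\cdot m-1\right)\cdot\hat{\sigma}_{2}^{2}\cdot t}$ rather than $e^{i\cdot m\cdot\hat{\mu}_{2,\tilde{k}}\cdot t+\frac{1}{2}\cdot\left(i\cdot m\right)^{2}\cdot\hat{\sigma}_{2}^{2}\cdot t}$; with this correction the exponents do combine to $i\cdot m\cdot\left[\hat{\mu}_{2,\tilde{k}}+\frac{1}{2}\cdot(i-1)\cdot m\cdot\hat{\sigma}_{2}^{2}\right]\cdot t$ as claimed, whereas your version would leave a spurious $\frac{1}{2}\cdot i\cdot m\cdot\hat{\sigma}_{2}^{2}\cdot t$.
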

\begin{proof}
Substituting $\hat{V}_{0}=1$, $\hat{r}=0$, (\ref{eq:mu_2_hat_tilde_k}),
(\ref{eq:sigma_2_hat}) and (\ref{eq:m_k_CPPI}) in (\ref{eq:kth_Moment_PP_PI})
leads to the proposition.\end{proof}
\begin{lem}[OBPP moments]
\label{lem:moments_OBPP} The $k$th moment, $k\in\mathbb{N}$, of
the payoff of an OBPP strategy with level of performance participation
$\alpha<1$ at maturity $T$ is given by 
\begin{align}
 & m_{k}\left(V^{OBPP}\left(T;T,V_{0},\alpha,S_{1},S_{2}\right)\right)\label{eq:m_k_OBPP}\\
 & =\alpha^{k}\cdot\mathbb{E}_{\mathbb{P}}\left[S_{1}(t)^{k}\right]\cdot\left\{ 1+\sum_{i=1}^{k}\sum_{l=0}^{i}{k \choose i}\cdot{i \choose l}\cdot(-1)^{i-l}\cdot\left(\frac{p}{\alpha}\right)^{l}\cdot e^{l\cdot\left[\hat{\mu}_{2,\tilde{k}}+\frac{1}{2}\cdot(l-1)\cdot\hat{\sigma}_{2}^{2}\right]\cdot T}\cdot\Phi\left(\hat{d}_{k,l}\right)\right\} ,\nonumber 
\end{align}
where 
\begin{equation}
\hat{d}_{k,l}=\frac{\ln\left(\frac{p}{\alpha}\right)+\left[\hat{\mu}_{2,\tilde{k}}+\left(l-\frac{1}{2}\right)\cdot\hat{\sigma}_{2}^{2}\right]\cdot T}{\hat{\sigma}_{2}\cdot\sqrt{T}}.\label{eq:d_hat_k_l}
\end{equation}
\end{lem}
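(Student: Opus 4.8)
The plan is to proceed exactly along the lines of the proof of Lemma \ref{lem:moments_CPPP}, combining the reduction of Lemma \ref{lem:kth_Moment_PP} with the explicit OBPI moment formula of Theorem \ref{thm:moments_OBPI}. First I would apply Lemma \ref{lem:kth_Moment_PP} to the OBPP, which yields
\[
m_k\left(V^{OBPP}(T)\right)=\mathbb{E}_{\mathbb{P}}\left[S_1(T)^k\right]\cdot\tilde{m}_k\left(\hat{V}^{OBPI}(T)\right),
\]
thereby reducing the task to computing the $k$th moment of the associated standard OBPI strategy in the discounted market with $S_1$ as num\'eraire, taken with respect to the equivalent measure $\tilde{\mathbb{P}}_k$ of (\ref{eq:P_k_tilde}).

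Next I would evaluate $\tilde{m}_k(\hat{V}^{OBPI}(T))$ by specializing Theorem \ref{thm:moments_OBPI} to the discounted market. By Lemma \ref{lem:OBPP_OBPI} the relevant parameters are the insurance level $\alpha^{PI}=\alpha$, the initial portfolio value $\hat{V}_0=1$, the initial asset value $\hat{S}_2(0)=1$, the risk-free rate $\hat{r}=0$, the volatility $\hat{\sigma}_2$ and the number of shares $p^{PI}=p$ (which, as noted after Lemma \ref{lem:OBPP_OBPI}, coincides with the OBPP share count). The one point requiring care is that the moment is taken under $\tilde{\mathbb{P}}_k$ rather than $\mathbb{P}$: under this measure the index ratio $\hat{S}_2$ carries drift $\hat{\mu}_{2,\tilde{k}}$ and diffusion $\hat{\sigma}_2$ according to the preceding Remark, Equations (\ref{eq:SDE_S_2_hat_P_k_tilde})--(\ref{eq:mu_2_hat_tilde_k}). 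Accordingly, in Theorem \ref{thm:moments_OBPI} I would substitute $\mu_S=\hat{\mu}_{2,\tilde{k}}$ and $\sigma_S=\hat{\sigma}_2$.

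With these substitutions the prefactor $(\alpha^{PI}V_0)^k$ becomes $\alpha^k$, the ratio $p^{PI}S_0/(\alpha^{PI}V_0)$ collapses to $p/\alpha$, and the exponent $l\mu_S T+\tfrac{1}{2}l(l-1)\sigma_S^2 T$ reorganizes to $l\left[\hat{\mu}_{2,\tilde{k}}+\tfrac{1}{2}(l-1)\hat{\sigma}_2^2\right]T$, while $d_{1,l}$ turns precisely into $\hat{d}_{k,l}$ of (\ref{eq:d_hat_k_l}). Multiplying the resulting bracketed expression by $\mathbb{E}_{\mathbb{P}}[S_1(T)^k]$ and factoring out $\alpha^k$ then delivers the claimed formula (\ref{eq:m_k_OBPP}).

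The computation involves no genuine obstacle: all the substantive work has already been done, namely the change of measure underlying Lemma \ref{lem:kth_Moment_PP}, the combinatorial expansion of the OBPI payoff moment in Theorem \ref{thm:moments_OBPI}, and the Girsanov drift adjustment recorded in the Remark. The only matter to watch is the consistent use of the $\tilde{\mathbb{P}}_k$-drift $\hat{\mu}_{2,\tilde{k}}$ in place of the real-world drift of $\hat{S}_2$, since this drift already absorbs the $k$-dependent measure shift from (\ref{eq:W_tilde_1_k}); this is exactly the substitution performed in the CPPP case of Lemma \ref{lem:moments_CPPP}.
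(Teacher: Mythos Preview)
Your proposal is correct and follows essentially the same route as the paper: apply Lemma \ref{lem:kth_Moment_PP} to reduce to the discounted OBPI moment under $\tilde{\mathbb{P}}_k$, then substitute $\hat{V}_0=1$, $\alpha^{PI}=\alpha$, $p^{PI}=p$, $\mu_S=\hat{\mu}_{2,\tilde{k}}$, $\sigma_S=\hat{\sigma}_2$ in Theorem \ref{thm:moments_OBPI}. The paper's proof is a one-line reference to exactly these substitutions, so there is nothing to add.
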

\begin{proof}
Substituting $\hat{V}_{0}=1$, (\ref{eq:mu_2_hat_tilde_k}), (\ref{eq:sigma_2_hat}),
strike $\alpha$ and (\ref{eq:m_k_OBPI}) in (\ref{eq:kth_Moment_PP_PI})
leads to the proposition.
\end{proof}
With the above general expressions for the $k$th moments we have
all the essential information to describe the entire payoff distributions
of the two performance participation (portfolio insurance) strategies.
The usually reported central moments are obtained by a final transformation.
\begin{rem}[Central moments]
\label{rmk:centralMoment_kthMoment} By applying the binomial theorem
the $k$th central moment $\mu_{k}(V)$ of a random variable $V$
follows directly from its $i$th moment $m_{i}(V)$, $i=0,...,k$
by 
\begin{equation}
\mu_{k}(V)=\mathbb{E}_{\mathbb{P}}\left[\left(V-\mathbb{E}_{\mathbb{P}}[V]\right)^{k}\right]=\sum_{i=0}^{k}{k \choose i}\cdot(-1)^{k-i}\cdot m_{i}(V)\cdot\left(m_{1}(V)^{k-i}\right).\label{eq:centralMoment_kthMoment}
\end{equation}

\end{rem}
In the sequel we will especially compare the first four (central)
moments of the payoff distributions of the two performance participation
strategies in more detail. We start with the expected strategy payoffs.
As mentioned earlier, with respect to the CPPP strategy we will analyze
various values of the multiplier $m$, among them the unique value
$m^{*}$ for which the expectations of the two strategies are equal.
Its derivation and analysis is the focus of the following section.

\subsubsection{Equality of payoff expectations}

\label{sec:equal_payoffExpectations}

The expected payoffs of the CPPP and the OBPP follow directly from
Lemma \ref{lem:moments_CPPP} and Lemma \ref{lem:moments_OBPP} as
the first moments of the resulting terminal portfolio value distributions
\begin{align}
\mu\left(V^{CPPP}\left(T;T,V_{0},\alpha,m,S_{1},S_{2}\right)\right) & =m_{1}\left(V^{CPPP}\left(T;T,V_{0},\alpha,m,S_{1},S_{2}\right)\right)\label{eq:ew_CPPP}\\
 & =\alpha\cdot V_{0}\cdot e^{\mu_{1}\cdot T}+\left(1-\alpha\right)\cdot V_{0}\cdot e^{\left[\mu_{1}+m\cdot\left(\mu_{2}-\mu_{1}\right)\right]\cdot T},\nonumber \\
\mu\left(V^{OBPP}\left(T;T,V_{0},\alpha,S_{1},S_{2}\right)\right) & =m_{1}\left(V^{OBPP}\left(T;T,V_{0},\alpha,S_{1},S_{2}\right)\right)\label{eq:ew_OBPP}\\
 & =\alpha\cdot V_{0}\cdot e^{\mu_{1}\cdot T}+V_{0}\cdot e^{\mu_{2}\cdot T}\cdot Call\left(0;T,\alpha,\hat{\mu}_{2,\tilde{1}},\hat{\sigma}_{2},p\cdot\hat{S}_{2}\right)\nonumber \\
 & =\alpha\cdot V_{0}\cdot e^{\mu_{1}\cdot T}\cdot\left(1-\Phi\left(\hat{d}_{1,0}\right)\right)+p\cdot V_{0}\cdot e^{\mu_{2}\cdot T}\cdot\Phi\left(\hat{d}_{1,1}\right),\nonumber 
\end{align}
where $\hat{\mu}_{2,\tilde{1}}=\mu_{2}-\mu_{1}$ and $\hat{d}_{k,l}$,
$l=0,1$ as defined in (\ref{eq:d_hat_k_l}).

The expected payoff of the CPPP strategy is independent of the variance-covariance
structure of the underlying risky assets. Thus, the expected return
is not affected by the additional source of risk. Moreover, it is
an exponentially growing function in the value of the multiplier $m$
if and only if the further condition $\mu_{1}<\mu_{2}$ is satisfied.
Since the multiplier controls the exposure to the active asset $S_{2}$
this is a natural expectation from the CPPP payoff sensitivity and
justifies our initial assumption made in Section \ref{sec:financialMarket}.
In contrast, an increase in the desired level of performance participation
$\alpha$ (exponentially) reduces the expected payoff (in case that
$\mu_{1}<\mu_{2}$). The enhanced participation guarantee in the reserve
asset comes at the price of a diminished cushion and thus less upside
potential stemming from a potential outperformance of the active asset
$S_{2}$.

With respect to the expected payoff of the OBPP strategy we observe
an analog sensitivity on the fraction $\alpha$. As motivated in Section
\ref{sec:OBPP} an increase in $\alpha$ is accompanied by a decrease
in the number of shares/exchange options $p$.

Equating the two expectations (\ref{eq:ew_CPPP}) and (\ref{eq:ew_OBPP})
leads to the following proposition.
\begin{lem}[Multiplier $m^{*}$]
\label{lem:mStar}For any parameterization of the financial market
(\ref{eq:SDE_risky}) and any level of performance participation $\alpha<1$
there exists a unique value $m^{*}\left(\alpha,\mu_{2}-\mu_{1},\hat{\sigma}_{2},T\right)$
of the multiplier such that 
\[
\mu\left(V^{CPPP}\left(T;T,V_{0},\alpha,m^{*},S_{1},S_{2}\right)\right)=\mu\left(V^{OBPP}\left(T;T,V_{0},\alpha,S_{1},S_{2}\right)\right),
\]
which is given by 
\begin{equation}
m^{*}\left(\alpha,\mu_{2}-\mu_{1},\hat{\sigma}_{2},T\right)=1+\frac{1}{\left(\mu_{2}-\mu_{1}\right)\cdot T}\cdot\ln\left(\frac{Call\left(0;T,\alpha,\mu_{2}-\mu_{1},\hat{\sigma}_{2},p\cdot\hat{S}_{2}\right)}{Call\left(0;T,\alpha,\hat{r},\hat{\sigma}_{2},p\cdot\hat{S}_{2}\right)}\right).\label{eq:m_star}
\end{equation}

\end{lem}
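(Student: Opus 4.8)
The plan is to solve the equation $\mu\left(V^{CPPP}\right)=\mu\left(V^{OBPP}\right)$ directly, exploiting the fact that only the CPPP side carries any dependence on $m$. First I would take the two explicit first moments from (\ref{eq:ew_CPPP}) and (\ref{eq:ew_OBPP}). Both expectations share the common floor term $\alpha\cdot V_{0}\cdot e^{\mu_{1}\cdot T}$, so equating them and cancelling this term reduces the problem to
\begin{equation*}
(1-\alpha)\cdot V_{0}\cdot e^{[\mu_{1}+m(\mu_{2}-\mu_{1})]\cdot T}=V_{0}\cdot e^{\mu_{2}\cdot T}\cdot Call\left(0;T,\alpha,\mu_{2}-\mu_{1},\hat{\sigma}_{2},p\cdot\hat{S}_{2}\right).
\end{equation*}
This is the crucial simplification: the right-hand side is a fixed positive constant, whereas the left-hand side carries the entire $m$-dependence through the single exponential factor $e^{m(\mu_{2}-\mu_{1})T}$.

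Next I would isolate $m$ by taking logarithms. Dividing both sides by $V_{0}\cdot e^{\mu_{1}T}$ and taking the natural logarithm yields $m(\mu_{2}-\mu_{1})T=(\mu_{2}-\mu_{1})T+\ln\left(Call(0;T,\alpha,\mu_{2}-\mu_{1},\hat{\sigma}_{2},p\cdot\hat{S}_{2})/(1-\alpha)\right)$, so that, provided $\mu_{2}>\mu_{1}$, one may divide by $(\mu_{2}-\mu_{1})T$ to obtain $m=1+\frac{1}{(\mu_{2}-\mu_{1})T}\ln\left(Call(0;T,\alpha,\mu_{2}-\mu_{1},\hat{\sigma}_{2},p\cdot\hat{S}_{2})/(1-\alpha)\right)$.

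To cast this into the symmetric ratio form (\ref{eq:m_star}), the key observation I would use is that the denominator $1-\alpha$ is itself a zero-rate Black--Scholes call. Indeed, the initial endowment condition (\ref{eq:p_cond}) together with $S_{1}(0)=V_{0}$ gives $V^{ex}(0;T,p\cdot S_{2},\alpha\cdot S_{1})=(1-\alpha)\cdot V_{0}$; dividing by $S_{1}(0)=V_{0}$ and invoking the discounting identity (\ref{eq:Ex_discounted_Call}) then shows $Call(0;T,\alpha,\hat{r},\hat{\sigma}_{2},p\cdot\hat{S}_{2})=1-\alpha$ with $\hat{r}=0$. Substituting this identity into the denominator produces exactly the stated expression (\ref{eq:m_star}).

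Finally, uniqueness follows from strict monotonicity: under $\mu_{2}>\mu_{1}$ the CPPP expectation (\ref{eq:ew_CPPP}) is a strictly increasing function of $m$ (the floor term is constant while the cushion term $(1-\alpha)V_{0}e^{[\mu_{1}+m(\mu_{2}-\mu_{1})]T}$ is strictly increasing), whereas the OBPP expectation does not depend on $m$; hence the equation has at most one root, and the closed-form $m^{*}$ derived above is that root. Existence is immediate from the explicit solution, which is well defined because the exchange option --- equivalently the call $Call(0;T,\alpha,\mu_{2}-\mu_{1},\hat{\sigma}_{2},p\cdot\hat{S}_{2})$ --- is strictly positive. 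I expect the only real subtlety to be the recognition that $1-\alpha$ coincides with the zero-rate call, which is what converts the bare quotient into the call-price ratio of (\ref{eq:m_star}); the excluded degenerate case $\mu_{2}=\mu_{1}$ would require separate treatment, since the formula is then undefined.
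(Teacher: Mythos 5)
Your derivation is correct, and it checks out line by line: equating (\ref{eq:ew_CPPP}) and (\ref{eq:ew_OBPP}), cancelling the common floor term, dividing by $V_{0}e^{\mu_{1}T}$ and taking logarithms gives $m=1+\frac{1}{(\mu_{2}-\mu_{1})T}\ln\bigl(Call(0;T,\alpha,\mu_{2}-\mu_{1},\hat{\sigma}_{2},p\hat{S}_{2})/(1-\alpha)\bigr)$, and your identification $1-\alpha=Call(0;T,\alpha,\hat{r},\hat{\sigma}_{2},p\hat{S}_{2})$ via (\ref{eq:p_cond}), $S_{1}(0)=V_{0}$ and (\ref{eq:Ex_discounted_Call}) is exactly what turns the quotient into the call-price ratio of (\ref{eq:m_star}). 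The route differs from the paper's in emphasis rather than substance: the paper first maps the problem onto the standard OBPI/CPPI comparison in the discounted market under $\tilde{\mathbb{P}}_{1}$ (using (\ref{eq:OBPP_OBPI}), (\ref{eq:CPPP_CPPI}) and the dynamics (\ref{eq:SDE_S_2_hat_P_k_tilde})) and then simply cites Bertrand and Prigent (2005) for the resulting multiplier, whereas you carry out the computation explicitly from the real-world first moments already displayed in Section \ref{sec:equal_payoffExpectations}. What your version buys is a self-contained argument that also makes the uniqueness mechanism transparent (the OBPP expectation is $m$-free while the CPPP cushion term is strictly monotone in $m$ when $\mu_{2}>\mu_{1}$); what the paper's version buys is brevity and a uniform appeal to the discounted-market reduction used throughout. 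Your flag on the degenerate case $\mu_{2}=\mu_{1}$ is apt --- the market setup only imposes $\mu_{2}\geq\mu_{1}$, and in the equality case both expectations collapse to $V_{0}e^{\mu_{1}T}$ for every $m$, so uniqueness genuinely fails there; the paper tacitly assumes the strict inequality, as its remark following the lemma confirms.
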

Here, $Call\left(t;T,\alpha,r_{f},\hat{\sigma}_{2},p\cdot\hat{S}_{2}\right)$
denotes the Black-Scholes value of a vanilla call option written on
$p$ shares of asset $\hat{S}_{2}$ with strike $\alpha$, risk-free
interest rate $r_{f}$, volatility $\hat{\sigma}_{2}$, evaluated
at time $t$ for maturity $T$.
\begin{proof}
Following from (\ref{eq:OBPP_OBPI}) and (\ref{eq:CPPP_CPPI}) the
problem can be reduced to the equivalent problem for the standard
portfolio insurance strategies in the discounted world and with respect
to the equivalent probability measure $\tilde{\mathbb{P}}_{1}$ ,
i.e. 
\[
\mathbb{E}_{\tilde{\mathbb{P}}_{1}}\left[\hat{V}^{CPPI}\left(T;T,\hat{V}_{0},\alpha,m^{*},\hat{r},\hat{S}_{2}\right)\right]=\mathbb{E}_{\tilde{\mathbb{P}}_{1}}\left[\hat{V}^{OBPI}\left(T;T,\hat{V}_{0},\alpha,\hat{r},\hat{S}_{2}\right)\right],
\]
where $\hat{V}_{0}=1$ and $\hat{r}=0$. The stochastic dynamics of
$\hat{S}_{2}$ with respect to $\tilde{\mathbb{P}}_{1}$ are provided
in (\ref{eq:SDE_S_2_hat_P_k_tilde}). This issue was already solved
in \citet{BertrandPrigent2005} yielding the proposed multiplier $m^{*}$.
\end{proof}
Note that since $\mu_{2}>\mu_{1}$ and thus $Call\left(0;T,\alpha,\mu_{2}-\mu_{1},\hat{\sigma}_{2},p\cdot\hat{S}_{2}\right)>Call\left(0;T,\alpha,\hat{r},\hat{\sigma}_{2},p\cdot\hat{S}_{2}\right)$
the value of the multiplier $m^{*}$ is always bigger than one. For
any value of the multiplier $m>m^{*}$ the expected payoff of the
CPPP strategy exceeds that of the OBPP strategy and vice versa.

The special multiplier $m^{*}$ is an increasing function of the investor-defined
level of performance participation $\alpha<1$. This sensitivity was
already motivated in \citet{BertrandPrigent2005} for the standard
OBPI and CPPI. Although both expected payoffs are decreasing in the
fraction $\alpha$ the CPPP is usually more sensitive to its variation.
This is mainly caused by the leveraging effect of the multiplier that
exponentially amplifies the reduction of the cushion.

As an example, Figure \ref{fig:sens_mStar_alpha} visualizes the evolution
of $m^{*}$ as a function of the level of performance participation
$\alpha$ for the standard case presented in Table \ref{tab:StandardParameterSet}.
With respect to the standard level of performance participation $\alpha=0.95$
the multiplier $m^{*}$ according to Equation (\ref{eq:m_star}) yields
$m^{*}\left(0.95,3.1\%,22.3\%,1\right)=6.90$. Note that since the
risky reserve asset features a higher drift than the risk-free asset
the associated lower excess return will usually induce higher values
of the multiplier $m^{*}$ than it is the case for the standard CPPI
strategy. 
\begin{figure}[htbp]
\centering \includegraphics[width=1\textwidth]{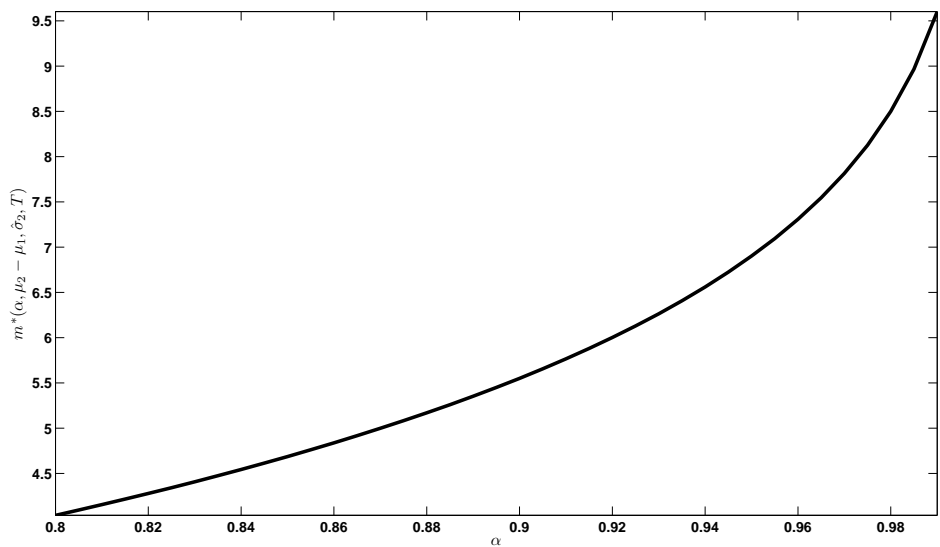}
\caption{Multiplier $m^{*}$ as a function of the investor-defined level of
performance participation $\alpha$ for the standard parameter setup
provided in Table \ref{tab:StandardParameterSet}. }

\label{fig:sens_mStar_alpha} 
\end{figure}

Furthermore, $m^{*}$ is a decreasing function in the excess drift
$\mu_{2}-\mu_{1}$. Although both expected values are increasing in
the drift difference%
\footnote{Recall that the rho of Black-Scholes standard vanilla call option
is always positive. See, e.g., \citet{Hull2009}.%
}, again the CPPP payoff reacts usually more sensitively to a change
as it is amplified by the multiplier $m$. Thus, for higher excess
drifts a smaller value of $m$ is sufficient to maintain an equivalent
level of return expectation.

In the next step we include higher (central) moments in our analysis.
Since the payoffs under consideration are non-linear a mean-variance
approach is not sufficient. This leads us to examine, besides expectation,
also standard deviation, skewness and kurtosis.

\subsubsection{Comparison of the first four moments\label{sec:first_four_Moments}}

Table \ref{tab:first_four_moments} provides the obtained values for
the expectation ($\mu$), standard deviation ($\sigma$), skewness
($\gamma_{3}$) and kurtosis ($\gamma_{4}$) of the returns of the
OBPP and the CPPP strategy in the case of the standard parameterization
provided in Table \ref{tab:StandardParameterSet}. With respect to
the CPPP strategy different values of the multiplier $m$ are analyzed
including the special value $m^{*}$. Note that for the sake of simplicity
the results are given in a return dimension instead of the usual portfolio
value dimension.

\begin{table}[htbp]
\centering %
\begin{tabular}{l||cc|ccccc}
 & OBPP  & CPPP  &  &  &  &  & \tabularnewline
 & $\alpha=0.95$  & $m^{*}=6.90$  & $m=3$  & $m=5$  & $m=6$  & $m=7$  & $m=8$\tabularnewline
\hline 
\hline 
$\mu$  & \multicolumn{2}{c}{8.10\%} & 7.34\%  & 7.72\%  & 7.91\%  & 8.12\%  & 8.33\% \tabularnewline
$\sigma$  & 12.37\%  & 19.92\%  & 5.02\%  & 9.58\%  & 13.92\%  & 20.74\%  & 31.84\%\tabularnewline
$\gamma_{3}$  & 2.3606  & 37.7639 & 1.1672 & 7.6060 & 16.9331  & 41.4930  & 118.2519 \tabularnewline
$\gamma_{4}$  & 7.4806  & 1.3912$\cdot10^{4}$  & 5.3743  & 222.9118  & 1.6409$\cdot10^{3}$  & 1.7946$\cdot10^{4}$  & 3.0765$\cdot10^{5}$ \tabularnewline
\end{tabular}\caption{Expectation, standard deviation, skewness and kurtosis of the payoff
distributions of an OBPP strategy and CPPP strategies with different
multipliers for the standard parameterization provided in Table \ref{tab:StandardParameterSet}.}

\label{tab:first_four_moments} 
\end{table}

We obtain comparable results as for standard portfolio insurance strategies
(see \citet{BertrandPrigent2005}). Both strategies generate an asymmetric
payoff profile. However, due to the significantly higher positive
skewness of the CPPP in comparison to the OBPP, it should be preferred
with respect to that criterion. Furthermore, the strategy's (excess)
kurtosis largely exceeds that of the OBPP. This feature is explained
by the outperformance of the dynamic strategy in the right tail of
the distribution where $\hat{S}_{2}(T)>>1$.

When differing from the special multiplier $m=m^{*}$ to consider
more general values of $m$, we have to distinguish two situations.
If the multiplier $m$ is higher than $m^{*}$, then the CPPP strategy
provides a higher expected payoff than the OBPP. The improved upside
potential/intensified participation in the better returning asset
$S_{2}$ is not for free and comes at the price of more risk, i.e.
an increasing standard deviation. Since the CPPP thus exceeds the
risk associated with the OBPP, none of the two strategies dominates
the other one in a mean-variance sense.

In contrast, if the multiplier $m$ takes smaller values than $m^{*}$,
then both the expected CPPP payoff as well as the strategy's standard
deviation are decreasing. Thus, the CPPP provides a smaller return
expectation than the OBPP. Furthermore, for small negative deviations
of $m$ with respect to $m^{*}$ the risk of the OBPP will still remain
less than that associated with the CPPP. Consequently the OBPP strictly
dominates the CPPP in a mean-variance sense. Nevertheless, for sufficiently
large differences $m^{*}-m$ both the expected payoff as well as the
standard deviation of the CPPP strategy take smaller values than the
OBPP ones. Hence, none of the strategies dominates the other one with
respect to the mean-variance criterion.

\citet{BertrandPrigent2005} further analyze the probability distributions
associated with the standard OBPI and the standard CPPI strategy with
a special focus on the payoff ratio $\frac{V^{OBPI}(T)}{V^{CPPI}(T)}$.
Among others they conclude that for usual values of the multiplier
the probability that the CPPI outperforms the OBPI at the terminal
date is increasing in the level of insurance. As the probability of
exercising the call option at maturity decreases with the increasing
strike, the upside potential of the OBPI strategy is significantly
reduced. Due to the special relationship between performance participation
and portfolio insurance strategies following from (\ref{eq:OBPP_OBPI})
and (\ref{eq:CPPP_CPPI}) this result remains valid for the more general
OBPP and CPPP, as 
\[
\frac{V^{OBPP}(T)}{V^{CPPP}(T)}=\frac{\hat{V}^{OBPI}(T)}{\hat{V}^{CPPI}(T)}.
\]
For further details of the analysis we refer the interested reader
to \citet{BertrandPrigent2005}.

In the following we will briefly study the dynamic properties of the
two strategies and in particular their \textquotedbl{}Greeks\textquotedbl{}.
Due to the elaborated relationship between performance participation
and standard portfolio insurance strategies the analysis can be kept
short for sensitivities where the additional source of risk is not
of direct interest.

\section{The Dynamic Behavior of OBPP and CPPP\label{sec:DynamicBehavior}}

With respect to the practical realization of the OBPP strategy, in
many situations the use of standardized traded options is not possible.
For example, the underlying(s) may be a diversified fund for which
no single option is available. Furthermore, the desired investment
period may also not coincide with the maturity of a listed option.
OTC options, on the other hand, involve several drawbacks like counterparty
risk or liquidity problems and raise the question for the fair price
of the contingent claim.

In practice, the underlying exchange options are thus usually synthesized
by dynamic replication. In the presumed Black-Scholes model (\ref{eq:SDE_risky})
the perfect hedging strategy according to the Margrabe formula (\ref{eq:value_Ex})
exists. Based on the induced dynamic hedging rule one can show that
the OBPP strategy actually represents a generalized CPPP strategy
with time-variable multiplier. Moreover, the study of the derived
multiplier allows to quantify the risk exposure associated with the
OBPP strategy.

Since in practice the number of rebalancing trades is limited due
to the associated transaction costs, the dynamic replication of the
exchange option induces hedging risks. This is the reason why we also
analyze the hedging properties (i.e. the Greeks) of both methods.
Special focus is put on the behavior of the exposures to the two risky
underlyings during the investment period.

\subsection{OBPP as generalized CPPP\label{sec:OBPP_as_CPPP}}

For the CPPP method the multiplier $m$ is the key parameter that
controls the amount invested in the active asset $S_{2}$ and the
possible leverage associated with it. Moreover, as shown in Section
\ref{sec:Moments}, it directly affects the risk-return ratio of the
resulting portfolio. Knowing about the importance of the multiplier,
\citet{BertrandPrigent2005} were able to show for the standard OBPI
strategy the existence of such an \textquotedbl{}implicit\textquotedbl{}'
parameter. This finding remains valid for the two-asset OBPP strategy.
The respective implicit multiplier for the OBPP is deduced in the
following proposition.
\begin{lem}[OBPP multiplier]
The OBPP method is equivalent to a generalized CPPP method with time-variable
multiplier given by 
\begin{equation}
m^{OBPP}\left(t;\alpha,S_{1},S_{2}\right)=\frac{p\cdot S_{2}(t)\cdot\Phi\left(d_{1}\right)}{V^{ex}\left(t;T,p\cdot S_{2},\alpha\cdot S_{1}\right)}=\frac{p\cdot\hat{S}_{2}(t)\cdot\Phi\left(d_{1}\right)}{Call\left(t;T,\alpha,\hat{r},\hat{\sigma}_{2},p\cdot\hat{S}_{2}\right)},\label{eq:m_OBPP}
\end{equation}
where $d_{1}$ as defined in (\ref{eq:d_1_ex}).\end{lem}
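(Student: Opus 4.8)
The plan is to recover the implicit multiplier from its defining role in the CPPP: for a constant-proportion rule the multiplier is the ratio of the dollar exposure to the active asset to the cushion, $m=E_2(t)/C(t)$, following from the relation $E_2(t)=m\cdot C(t)$ in Section~\ref{sec:CPPP}. I would read off both quantities for the OBPP and form their quotient. The cushion is immediate: since the OBPP value decomposes as $V^{OBPP}(t)=\alpha\cdot S_1(t)+V^{ex}(t;T,p\cdot S_2,\alpha\cdot S_1)$ by (\ref{eq:value_OBPP}) and the floor is $F(t)=\alpha\cdot S_1(t)$, the cushion equals the value of the protecting exchange option,
\[
C(t)=V^{OBPP}(t)-F(t)=V^{ex}\left(t;T,p\cdot S_2,\alpha\cdot S_1\right).
\]

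Next I would compute the OBPP's dollar exposure to the active asset $S_2$. The $\alpha$ shares of the reserve asset held as the floor carry no $S_2$ exposure, so all exposure to $S_2$ enters through the replicating portfolio of the exchange option. I would therefore differentiate the Margrabe value (\ref{eq:value_Ex}) with respect to $S_2$ to obtain the delta $\partial V^{ex}/\partial S_2=p\cdot\Phi(d_1)$, hence $E_2(t)=p\cdot S_2(t)\cdot\Phi(d_1)$. Dividing by the cushion gives the first expression in (\ref{eq:m_OBPP}). For the second expression I would invoke the discounting identity (\ref{eq:Ex_discounted_Call}), namely $V^{ex}(t;T,p\cdot S_2,\alpha\cdot S_1)=S_1(t)\cdot Call(t;T,\alpha,\hat{r},\hat{\sigma}_2,p\cdot\hat{S}_2)$, together with $S_2(t)=S_1(t)\cdot\hat{S}_2(t)$; the common factor $S_1(t)$ cancels from numerator and denominator, leaving the claimed index-ratio form.

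The one step that is more than bookkeeping is the delta computation, which I expect to be the main obstacle. Differentiating (\ref{eq:value_Ex}) produces, besides the $p\cdot\Phi(d_1)$ term, the extra contributions $p\cdot S_2\cdot\phi(d_1)\cdot\partial_{S_2}d_1-\alpha\cdot S_1\cdot\phi(d_2)\cdot\partial_{S_2}d_2$, where $\phi=\Phi'$ is the standard normal density. Showing these cancel is the crux: since $d_2=d_1-\hat{\sigma}_2\sqrt{T-t}$ yields $\partial_{S_2}d_2=\partial_{S_2}d_1$, it suffices to establish the Margrabe analogue of the Black--Scholes identity $p\cdot S_2\cdot\phi(d_1)=\alpha\cdot S_1\cdot\phi(d_2)$. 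This in turn follows from $\phi(d_1)/\phi(d_2)=\exp\!\left(-\tfrac12(d_1^2-d_2^2)\right)$ and the elementary simplification $d_1^2-d_2^2=(d_1+d_2)(d_1-d_2)=2\ln\!\left(\frac{p\cdot S_2}{\alpha\cdot S_1}\right)$, so that $\phi(d_1)/\phi(d_2)=\frac{\alpha\cdot S_1}{p\cdot S_2}$ and the two terms cancel. With this identity the exposure reduces cleanly to $p\cdot S_2(t)\cdot\Phi(d_1)$, completing the argument.
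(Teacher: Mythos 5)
Your proof is correct, but it takes a genuinely different route from the paper. The paper's own proof is essentially a one-line transfer argument: it cites \citet{BertrandPrigent2005}'s result that the standard OBPI carries an implicit multiplier equal to the call delta exposure divided by the call value, and then maps that statement into the performance-participation setting via the numéraire relationships (\ref{eq:OBPP_OBPI}), (\ref{eq:CPPP_CPPI}) and the discounting identity (\ref{eq:Ex_discounted_Call}). You instead rebuild the result from first principles: you identify the cushion with the exchange-option value, compute the dollar exposure to $S_{2}$ by differentiating the Margrabe formula (\ref{eq:value_Ex}), and verify the crucial cancellation $p\cdot S_{2}\cdot\varphi(d_{1})=\alpha\cdot S_{1}\cdot\varphi(d_{2})$ via $d_{1}^{2}-d_{2}^{2}=2\ln\bigl(\frac{p\cdot S_{2}}{\alpha\cdot S_{1}}\bigr)$ — all of which checks out, and your $\Delta_{2}=p\cdot\Phi(d_{1})$ agrees with the delta the paper records later in Section \ref{sec:Delta}. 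What your approach buys is self-containedness and an explicit confirmation of the Margrabe delta identity; what the paper's approach buys is brevity and the reuse of a known one-asset result, at the cost of leaning on an external reference. One small point you leave implicit: identifying $V^{OBPP}-E_{2}$ as the position in $S_{1}$ (so that the strategy really is a two-asset constant-proportion rule with no cash leg) rests on the degree-one homogeneity of the Margrabe value in $(S_{1},S_{2})$, which via Euler's theorem gives $V^{ex}=S_{1}\partial_{S_{1}}V^{ex}+S_{2}\partial_{S_{2}}V^{ex}$; it would be worth one sentence, though the paper does not spell it out either.
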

\begin{proof}
The proposition follows directly from \citet{BertrandPrigent2005}'s
result for standard CPPI strategies and the similarity of performance
participation and portfolio insurance according to (\ref{eq:OBPP_OBPI})
and (\ref{eq:CPPP_CPPI}) as well as (\ref{eq:Ex_discounted_Call}).
\end{proof}
In analogy to the CPPP strategy the OBPP multiplier is equal to the
ratio of the exposure to the active asset $S_{2}$ and the OBPP cushion.
The former is given by the amount invested in asset $S_{2}$ to replicate
the exchange option $V^{ex}\left(t;T,p\cdot S_{2},\alpha\cdot S_{1}\right)$
and the latter by the value of the exchange option itself. In terms
of the discounted asset universe, respectively, this is equal to the
exposure to $\hat{S}_{2}$ to replicate the corresponding call option
on the relative asset $\hat{S}_{2}$ divided by the OBPI cushion,
which is the call value.

As an example Figure \ref{fig:sens_mOBPP_S} visualizes the OBPP multiplier
as a function of the index ratio $\hat{S}_{2}$ at time $t=0.25,0.5,0.75$
for the case of the standard parameterization provided in Table \ref{tab:StandardParameterSet}.
\begin{figure}[H]
\centering \includegraphics[width=1\textwidth]{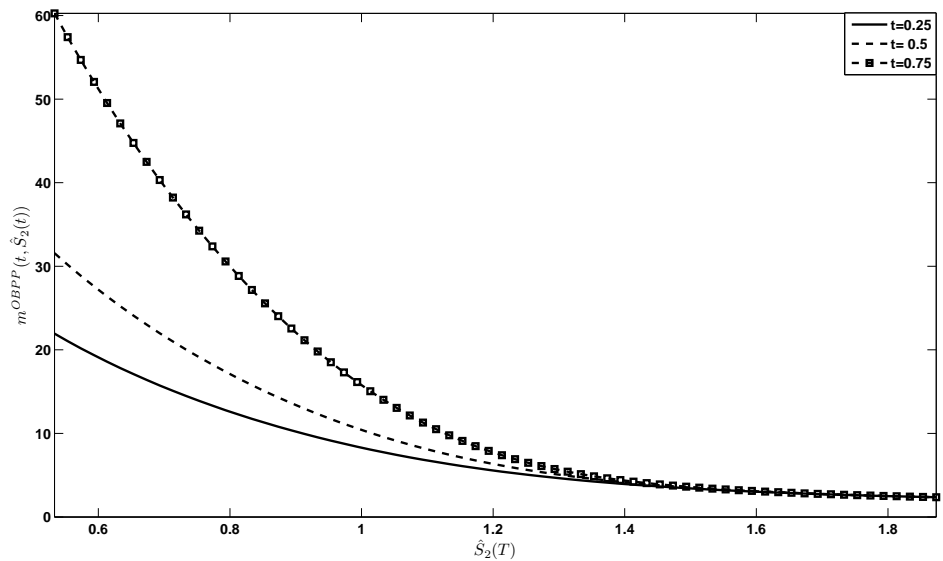}
\caption{OBPP multiplier $m^{OBPP}\left(t;\alpha,\hat{S}_{2}\right)$ as a
function of the index ratio $\hat{S}_{2}$ at time $t=0.25,0.5,0.75$
for the standard parameterization provided in Table \ref{tab:StandardParameterSet}.}

\label{fig:sens_mOBPP_S} 
\end{figure}

Hence, the OBPP multplier is at any time $t\in[0,T]$ a decreasing
function of the index ratio $\hat{S}_{2}$. Furthermore, following
from the Margrabe formula (\ref{eq:value_Ex}) it is always bigger
than $1$.

It can be easily seen that the OBPP multiplier usually takes higher
values than that of standard CPPP strategies, except for the case
when the active asset $S_{2}$ significantly outperforms the reserve
asset $S_{1}$ and the associated exchange or call option, respectively,
is thus (deeply) in the money. This is due to the small values of
the OBPP cushion, i.e. the value of the exchange option, when the
index ratio $\hat{S}_{2}$ is small.

The high values of the multiplier involve potential risk when the
market drops suddenly: it means that either the OBPP cushion is too
small or the exposure to the riskier asset is too high. In contrast,
with an increasing outperformance of the active asset $S_{2}$ over
the reserve asset $S_{1}$ the low values of the multiplier prevent
the OBPP portfolio from being overinvested in the riskier asset. Nevertheless,
a direct implication of that feature is that in the case of a sustainable
outperformance of the active asset over the entire investment period,
without any drop, the CPPP strategy will perform better due to the
higher active exposure.

To conclude the analysis of $m^{OBPP}$ Figure \ref{fig:sens_cdf_mOBPP_t}
visualizes the evolution of the cumulative distribution function of
the OBPP multiplier over time for the standard parameter set provided
in Table \ref{tab:StandardParameterSet}. 
\begin{figure}[H]
\centering \includegraphics[width=1\textwidth]{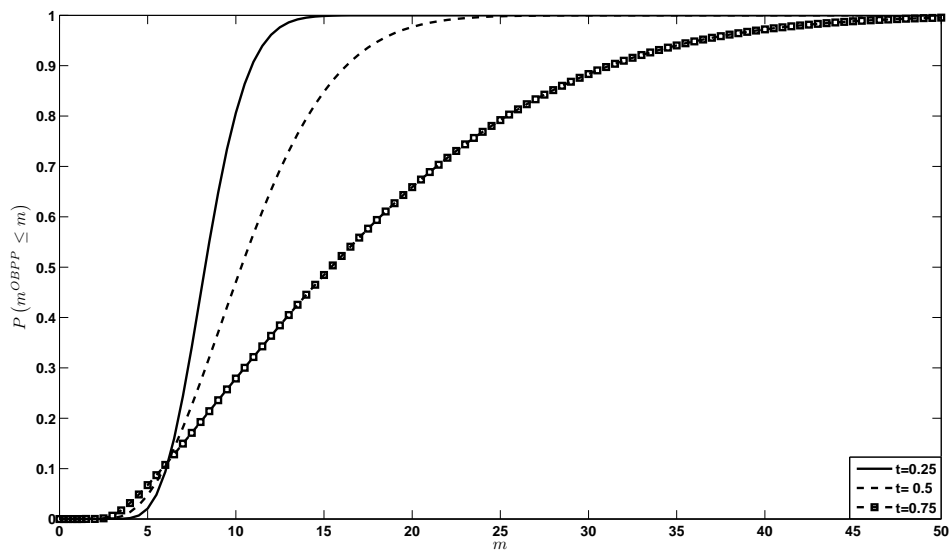}
\caption{Evolution of the cumulative distribution function of $m^{OBPP}$ for
$t=0.25,0.5,0.75$ and the standard parameter set provided in Table
\ref{tab:StandardParameterSet}.}

\label{fig:sens_cdf_mOBPP_t} 
\end{figure}

As time increases, the probability of obtaining higher values of the
OBPP multiplier increases.%
\footnote{Note that close to maturity of the exchange option (or the corresponding
call option in the discounted asset framework) the multiplier is either
infinity when the exchange (call) option is not executed, since both
the cushion and the exposure are nil. Otherwise $m^{OBPP}$ converges
to $\frac{p\cdot S_{2}}{p\cdot S_{2}-\alpha\cdot S_{1}}=\frac{p\cdot\hat{S}_{2}}{p\cdot\hat{S}_{2}-\alpha}$. %
} \citet{BertrandPrigent2005} argue that this is essentially due to
the rise of variance with time.

In the next step we will analyze the Greeks of the OBPP and the CPPP
in more detail which represent the dynamic hedging properties of the
two performance participation strategies.

\subsection{The Greeks of the performance participation strategies\label{sec:Greeks}}

We start with the delta representing one of the main concerns of a
portfolio manager.

\subsubsection{The delta\label{sec:Delta}}

Recall that the delta $\Delta$ measures the rate of change of the
portfolio value with respect to changes in the underlying asset prices.
In the case of the OBPP the sensitivities of the strategy performance
on the asset performances follow directly from the deltas of the underlying
exchange option. An analog calculation as for the derivation of the
delta of a vanilla call option yields 
\begin{align*}
\Delta_{1}^{OBPP}(t) & =\frac{\partial V^{OBPP}(t)}{\partial S_{1}(t)}=\alpha\cdot\Phi\left(-d_{2}\right)=\alpha\cdot\left(1-\Phi\left(d_{2}\right)\right),\\
\Delta_{2}^{OBPP}(t) & =\frac{\partial V^{OBPP}(t)}{\partial S_{2}(t)}=p\cdot\Phi\left(d_{1}\right).
\end{align*}
The deltas of the OBPP strategy are always positive, smaller (or equal)
to one and tend to $0$ ($1$) and $1$ ($0$) as the investment horizon
approaches its end and $p\cdot\hat{S}_{2}>\alpha$ ($p\cdot\hat{S}_{2}<\alpha$),
respectively. In contrast, if the option is at-the-money, i.e. $p\cdot\hat{S}_{2}=\alpha$,
the deltas converge to $\frac{\alpha}{2}$ and $\frac{p}{2}$, respectively,
as time to maturity approaches zero. This limiting behavior is easy
to understand from the payoff structure of the exchange option: as
the time to maturity tends to zero the OBPP portfolio is mainly invested
in the better performing asset. However, no leveraging is included.

For the CPPP strategy the deltas follow by partial differentiation
of the portfolio value (\ref{eq:value_CPPP}) as 
\begin{align}
\Delta_{1}^{CPPP}(t) & =\frac{\partial V^{CPPP}(t)}{\partial S_{1}(t)}=\alpha+(1-m)\cdot\beta_{CPPP}\left(t;\alpha,m,\hat{\sigma}_{2}\right)\cdot\left(\frac{S_{2}(t)}{S_{1}(t)}\right)^{m},\label{eq:delta_1_CPPP}\\
\Delta_{2}^{CPPP}(t) & =\frac{\partial V^{CPPP}(t)}{\partial S_{2}(t)}=m\cdot\beta_{CPPP}\left(t;\alpha,m,\hat{\sigma}_{2}\right)\cdot\left(\frac{S_{2}(t)}{S_{1}(t)}\right)^{m-1}.\label{eq:delta_2_CPPP}
\end{align}
Whereas similar to the OBPP strategy $\Delta_{2}^{CPPP}$ is always
positive, $\Delta_{1}^{CPPP}$ is always smaller than the level of
performance participation $\alpha<1$ and especially for high values
of the multiplier $m>1$ and a strong outperformance of $S_{2}$ with
respect to $S_{1}$ likely to take (highly) negative values. Simultaneously,
the $\Delta_{2}^{CPPP}$ is then very likely to take significantly
higher values than one. The observed sensitivity is due to the convex
leveraging feature of the CPPP strategy in the active asset. The strategy
value as well as $\Delta_{2}^{CPPP}$ become the more convex in $S_{2}$
the higher the value of the multiplier $m$. A high stochastic floor
in terms of $S_{1}$ thus significantly reduces the cushion and the
overall performance of the strategy. Furthermore, since $\beta_{CPPP}\left(t;\alpha,m,\hat{\sigma}_{2}\right)$
is a decreasing function in time $t$, $\Delta_{1}^{CPPP}$ is increasing
and $\Delta_{2}^{CPPP}$ is decreasing with time.

Figure \ref{fig:sens_deltaS1} and \ref{fig:sens_deltaS2} show the
evolution of the deltas of the two performance participation strategies
with respect to the reserve asset $S_{1}$ and the active asset $S_{2}$
as functions of the index ratio $\hat{S}_{2}$ at time $t=0.75$.
The standard parameter set given in Table \ref{tab:StandardParameterSet}
is applied. With respect to the CPPP strategy different values of
the multiplier are analyzed. 
\begin{figure}[htbp]
\centering \includegraphics[width=1\textwidth]{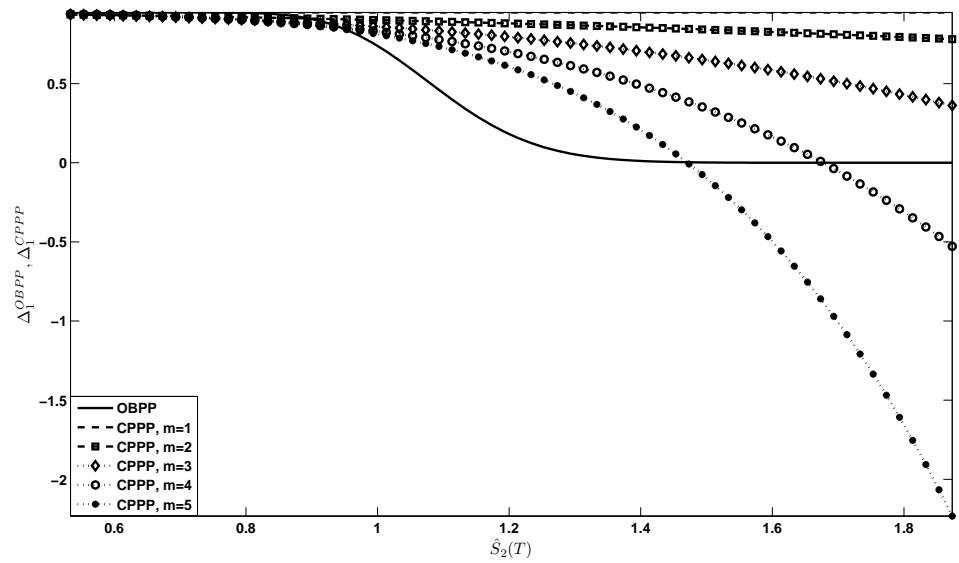}
\caption{OBPP and CPPP delta with respect to the reserve asset $S_{1}$ as
a function of the index ratio $\hat{S}_{2}$ at time $t=0.75$. The
standard parameter set provided in Table \ref{tab:StandardParameterSet}
is applied. With respect to the CPPP the multipliers $m=1,2,3,4,5$
are analyzed.}

\label{fig:sens_deltaS1} 
\end{figure}

\begin{figure}[htbp]
\centering \includegraphics[width=1\textwidth]{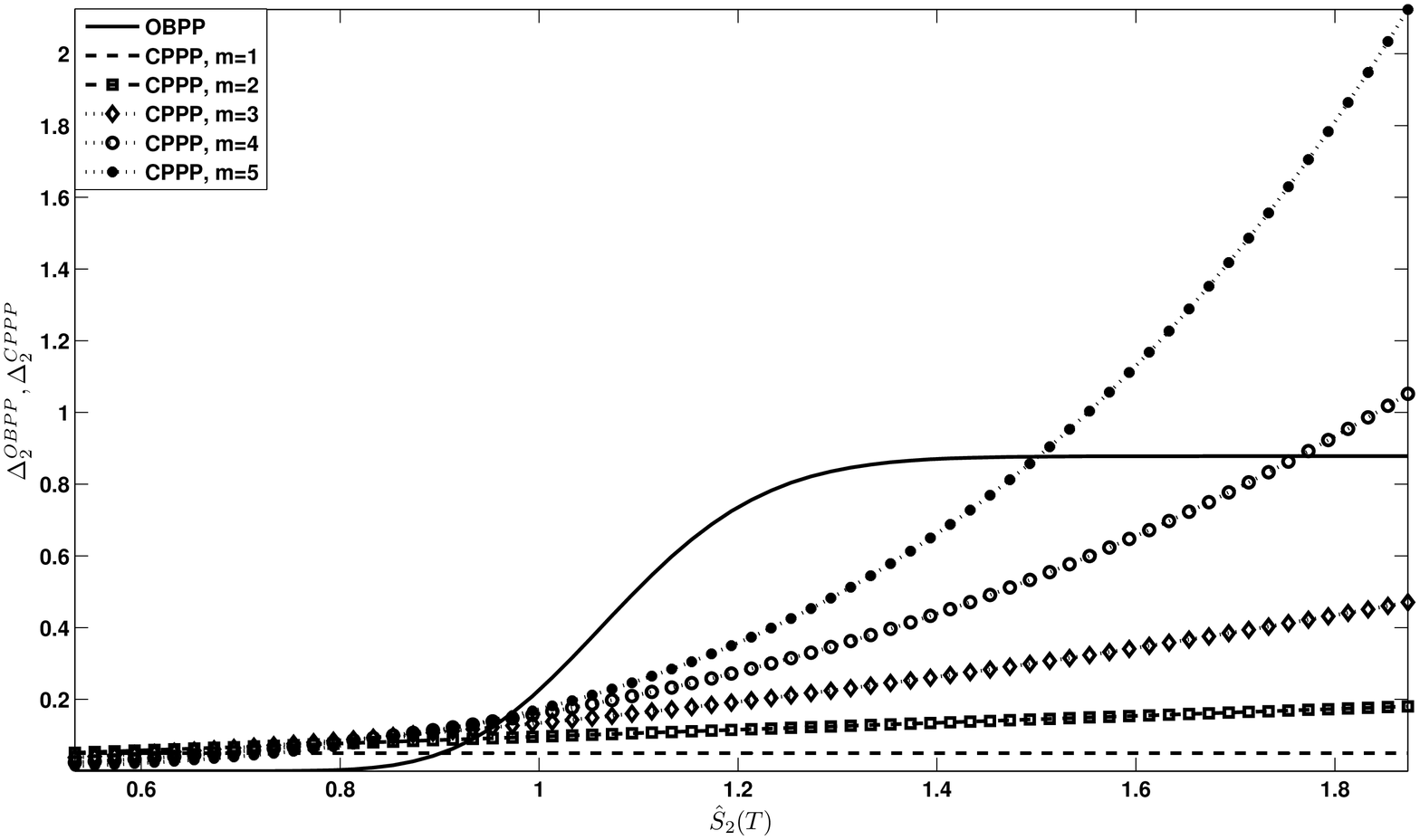}
\caption{OBPP and CPPP delta with respect to the active asset $S_{2}$ as a
function of the index ratio $\hat{S}_{2}$ at time $t=0.75$. The
standard parameter set provided in Table \ref{tab:StandardParameterSet}
is applied. With respect to the CPPP the multipliers $m=1,2,3,4,5$
are analyzed.}

\label{fig:sens_deltaS2} 
\end{figure}

Notice that for moderate $\hat{S}_{2}>1$, i.e. when the exchange
(or the discounted call) option is in-the-money, the OBPP delta with
respect to the active asset $S_{2}$ (with respect to the reserve
asset $S_{1}$) is greater (smaller) than that of the CPPP one. This
induces a higher risk of the option-based strategy in the case of
a sudden market downturn of $S_{2}$ relative to $S_{1}$. In contrast,
if the active asset significantly outperforms the reserve asset, i.e.
when the option is deeply in-the-money, then the CPPP fall exceeds
the OBPP one.

\citet{BertrandPrigent2005} analyze the delta of standard portfolio
insurance strategies in more detail. In fact, due to the relationship
between performance participation and portfolio insurance strategies
according to (\ref{eq:OBPP_OBPI}) and (\ref{eq:CPPP_CPPI}) the deltas
of the the OBPP and the CPPP with respect to the active asset $S_{2}$
are actually equal to the deltas of the corresponding standard portfolio
insurance strategies with respect to the discounted underlying $\hat{S}_{2}$
\[
\Delta_{2}^{PP}(t)=\frac{\partial V^{PP}(t)}{\partial S_{2}(t)}=\frac{\partial S_{1}(t)\cdot\hat{V}^{PI}(t)}{\partial S_{2}(t)}=\frac{\partial\hat{V}^{PI}(t)}{\partial\hat{S}_{2}(t)}=\hat{\Delta}^{PI}.
\]
Thus, all results derived by \citet{BertrandPrigent2005} remain valid
for the generalized performance participation strategies in terms
of the index ratio $\hat{S}_{2}$ instead of the former risky asset
$S$. Among others they show that in probability the CPPP is significantly
less sensitive to the ratio $\hat{S}_{2}$ than the OBPP.

Next we will analyze the gamma of the OBPP and the CPPP that measures
the convexity of the portfolio values in the underlying asset prices.
It is important to ensure an effective delta hedge.

\subsubsection{The gamma\label{sec:Gamma}}

Recall that the gamma $\Gamma$ measures the rate of change in the
delta with respect to changes in the underlying prices. In the case
of the OBPP it follows directly from the gamma of the underlying exchange
option. An analog calculation to the derivation of a call option's
gamma yields 
\begin{align*}
\Gamma_{1}^{OBPP}(t) & =\frac{\partial^{2}V^{OBPP}(t)}{\partial S_{1}(t)^{2}}=\alpha\cdot\frac{\varphi\left(-d_{2}\right)}{S_{1}(t)\cdot\hat{\sigma}_{2}\cdot\sqrt{T-t}}=\alpha\cdot\frac{\varphi\left(d_{2}\right)}{S_{1}(t)\cdot\hat{\sigma}_{2}\cdot\sqrt{T-t}},\\
\Gamma_{2}^{OBPP}(t) & =\frac{\partial^{2}V^{OBPP}(t)}{\partial S_{2}(t)^{2}}=p\cdot\frac{\varphi\left(d_{1}\right)}{S_{2}(t)\cdot\hat{\sigma}_{2}\cdot\sqrt{T-t}}.
\end{align*}
Hence, the OBPP gammas are always positive and converge - except for
the at-the-money-case - to zero as the option approaches its maturity.
When the option is at-the-money, i.e. $p\cdot\hat{S}_{2}=\alpha$,
then $d_{i}$, $i=1,2$ converges to zero and $\varphi\left(d_{i}\right)$
takes the value $\frac{1}{\sqrt{2\pi}}$. Overall, the gammas diverge
to infinity in this case.

Analogously, the gammas of the CPPP strategy follow by partial differentiation
of the derived deltas (\ref{eq:delta_1_CPPP}) and (\ref{eq:delta_2_CPPP})
as 
\begin{align*}
\Gamma_{1}^{CPPP}(t) & =\frac{\partial^{2}V^{CPPP}(t)}{\partial S_{1}(t)^{2}}=m\cdot(m-1)\cdot\beta_{CPPP}\left(t;\alpha,m,\hat{\sigma}_{2}\right)\cdot S_{1}(t)^{-1}\cdot\left(\frac{S_{2}(t)}{S_{1}(t)}\right)^{m},\\
\Gamma_{2}^{CPPP}(t) & =\frac{\partial^{2}V^{CPPP}(t)}{\partial S_{2}(t)^{2}}=m\cdot(m-1)\cdot\beta_{CPPP}\left(t;\alpha,m,\hat{\sigma}_{2}\right)\cdot S_{2}(t)^{-1}\cdot\left(\frac{S_{2}(t)}{S_{1}(t)}\right)^{m-1}.
\end{align*}
Thus, the CPPP gammas are always positive justifying the convex leveraging
feature of the strategy.

Figure \ref{fig:sens_gammaS1} and \ref{fig:sens_gammaS2} visualize
the OBPP and the CPPP gammas as a function of the index ratio $\hat{S}_{2}$
at time $t=0.75$. The standard parameterization provided in Table
\ref{tab:StandardParameterSet} is applied. With respect to the CPPP
different values of the multiplier are analyzed. Note that the common
factor $S_{1}(t)^{-1}$ or $S_{2}(t)^{-1}$ with respect to $\Gamma_{1}^{OBPP,CPPP}$
or $\Gamma_{2}^{OBPP,CPPP}$, respectively, has been omitted. 
\begin{figure}[htbp]
\centering \includegraphics[width=1\textwidth]{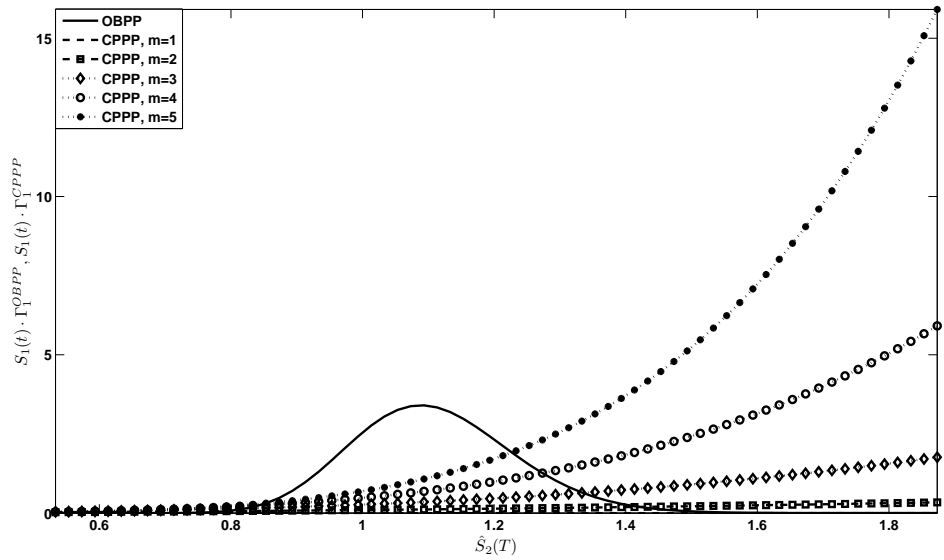}
\caption{OBPP and CPPP gamma with respect to the reserve asset $S_{1}$ as
a function of the index ratio $\hat{S}_{2}$ for $t=0.75$. The standard
parameterization provided in Table \ref{tab:StandardParameterSet}
is applied. With respect to the CPPP the multipliers $m=1,2,3,4,5$
are analyzed.}

\label{fig:sens_gammaS1} 
\end{figure}

\begin{figure}[htbp]
\centering \includegraphics[width=1\textwidth]{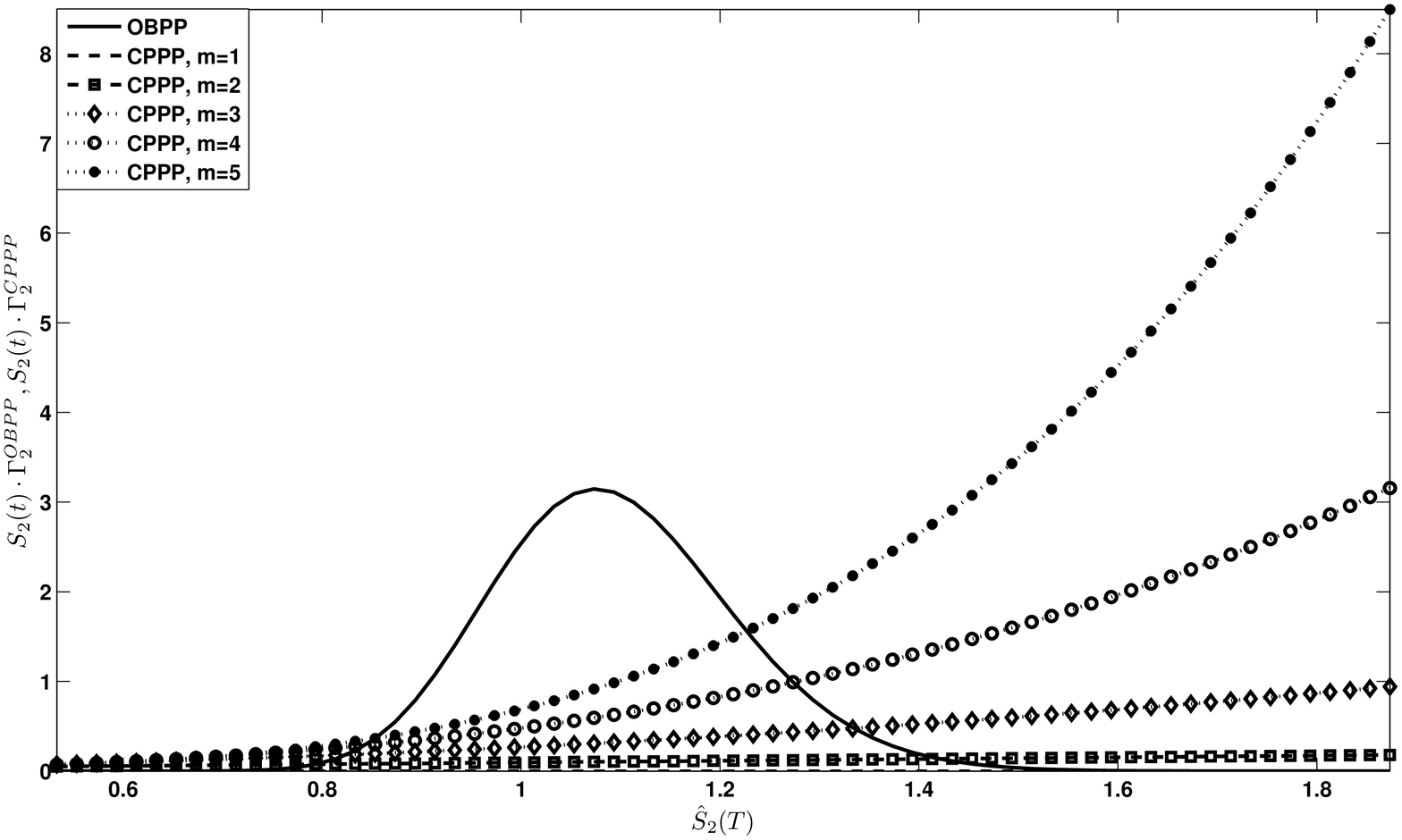}
\caption{OBPP and CPPP gamma with respect to the active asset $S_{2}$ as a
function of the index ratio $\hat{S}_{2}$ for $t=0.75$. The standard
parameterization provided in Table \ref{tab:StandardParameterSet}
is applied. With respect to the CPPP the multipliers $m=1,2,3,4,5$
are analyzed.}

\label{fig:sens_gammaS2} 
\end{figure}

With respect to the CPPP strategy the gammas are especially important
for very high values of the index ratio $\hat{S}_{2}$. Nevertheless,
for the most probable realizations of $\hat{S}_{2}$ around one, and
especially when the exchange option is in-the-money, the CPPP gamma
is smaller than the OBPP one for a large range of values. Furthermore,
since $\beta_{CPPP}\left(t;\alpha,m,\hat{\sigma}_{2}\right)$ is monotonically
decreasing in time $t$, the CPPP gammas are decreasing when approaching
the end of the investment horizon; yet, they never reach zero. In
contrast, with the time to maturity approaching zero the OBPP gammas
will converge to zero for an in- or out-of-the money call and diverge
if $p\cdot\hat{S}_{2}=\alpha$.

To conclude the section we briefly look at the portfolio vegas of
the two performance participation strategies under consideration.

\subsubsection{The vega\label{sec:Vega}}

The vega measures the strategy's sensitivity on the volatility(ies)
of the underlying assets. Since both the CPPP and the OBPP portfolio
value only depend on the aggregated volatility $\hat{\sigma}_{2}$
of the index ratio $\hat{S}_{2}$ we restrict our analysis of the
vega to this special diffusion. We recall that $\hat{\sigma}_{2}$
is a decreasing function in the asset correlation $\rho_{12}$. Furthermore,
using the assumption $\sigma_{1}<\sigma_{2}$ it is increasing in
$\sigma_{2}$; with respect to $\sigma_{1}$ the sensitivity is ambigous.

Due to the special relationship between performance participation
and portfolio insurance strategies according to (\ref{eq:OBPP_OBPI})
and (\ref{eq:CPPP_CPPI}) the vegas of the two strategies follow directly
from the vegas of the underlying portfolio insurance strategies in
the discounted market. More precisely, we obtain%
\footnote{See, e.g., \citet{Hull2009} for the call vega and \citet{BertrandPrigent2005}
for the vega of the CPPI strategy. Note that similar to \citet{BertrandPrigent2005}
the effect of the volatility on the initial price of the call/exchange
option is not taken into account since it only depends on the expected
volatility and not on the actual one.%
} 
\begin{align*}
\nu^{OBPP}(t) & =\frac{\partial V^{OBPP}(t)}{\partial\hat{\sigma}_{2}}=p\cdot S_{1}(t)\cdot\hat{S}_{2}(t)\cdot\varphi\left(d_{1}\right)\cdot\sqrt{T-t}=p\cdot S_{2}(t)\cdot\varphi\left(d_{1}\right)\cdot\sqrt{T-t},\\
\nu^{CPPP}(t) & =\frac{\partial V^{CPPP}(t)}{\partial\hat{\sigma}_{2}}=S_{1}(t)\cdot m\cdot(1-m)\cdot\hat{\sigma}_{2}\cdot t\cdot\left(\hat{V}^{CPPI}(t)-\hat{F}^{CPPI}(t)\right)\\
 & =-m\cdot(m-1)\cdot\hat{\sigma}_{2}\cdot t\cdot C(t).
\end{align*}
Whereas the vega of the OBPP strategy is always positive, the vega
of the CPPP strategy takes negative values for $m>1$. Thus, an increase
in the volatility of the index ratio $\hat{S}_{2}$ reduces the actual
CPPP portfolio value. The extent of the decrease is the larger the
higher the value of the multiplier $m$ and the longer the elapsed
investment time $t$ since inception. If $m=1$ the value of the CPPP
strategy is independent on a change in the tracking error $\hat{\sigma}_{2}$.
In contrast, the vega of the exchange/call option is positive, since
an increase in the volatility increases the probability that the option
will be executed at its expiry and thus its price. The closer the
investment horizon approaches maturity the smaller is the impact of
a change in $\hat{\sigma}_{2}$ since $\lim_{t\rightarrow T}\nu^{OBPP}(t)=0$.

As an example Figure \ref{fig:sens_vega} visualizes the discounted
OBPP and CPPP vega as a function of the index ratio $\hat{S}_{2}$
at time $t=0.75$. The standard model parameterization provided in
Table \ref{tab:StandardParameterSet} is applied. With respect to
the CPPP different values of the multiplier are analyzed. 
\begin{figure}[htbp]
\centering \includegraphics[width=1\textwidth]{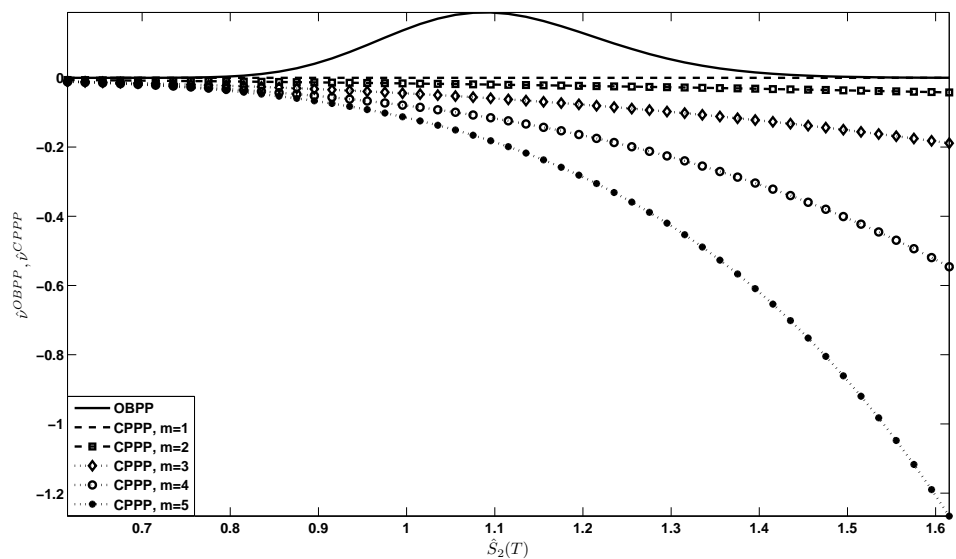}
\caption{Discounted OBPP and CPPP vega as functions of the index ratio $\hat{S}_{2}$
for $t=0.75$. The standard parameter set provided in Table \ref{tab:StandardParameterSet}
is applied. With respect to the CPPP different values of the multiplier
are analyzed.}

\label{fig:sens_vega} 
\end{figure}

Table \ref{tab:sensitivitiesGreeks} summarizes the described sensitivities
of the strategy Greeks with respect to the magnitude of the index
ratio $\hat{S}_{2}$. 
\begin{table}[htbp]
\centering %
\begin{tabular}{l||ccc|rcl|ccc}
 & \multicolumn{3}{c|}{$\hat{S}_{2}<<1$} & \multicolumn{3}{c|}{$\hat{S}_{2}\approx1$} & \multicolumn{3}{c}{$\hat{S}_{2}>>1$}\tabularnewline
 & OBPP  &  & CPPP  & OBPP  &  & CPPP  & OBPP  &  & CPPP \tabularnewline
\hline 
\hline 
$\Delta_{1}$  &  & $\approx$  &  & neg.  & $<$  & neg.  & neg.  & $>>$  & neg. \tabularnewline
$\Delta_{2}$  &  & $\approx$  &  & pos.  & $>$  & pos.  & pos.  & $<<$  & pos. \tabularnewline
\hline 
$\Gamma_{1}$  &  & $\approx$  &  & pos.  & $>$  & pos.  & pos.  & $<<$  & pos. \tabularnewline
$\Gamma_{2}$  &  & $\approx$  &  & pos.  & $>$  & pos.  & pos.  & $<<$  & pos. \tabularnewline
\hline 
$\nu$  &  & $\approx$  &  & pos.  & $>$  & neg.  & pos.  & $>>$  & neg. \tabularnewline
\end{tabular}\caption{Sensitivities of the OBPP and CPPP Greeks with respect to the index
ratio $\hat{S}_{2}$.}

\label{tab:sensitivitiesGreeks} 
\end{table}

Overall, the two portfolios react in a similar way on a change in
the value of the index ratio $\hat{S}_{2}$ or the associated volatility
$\hat{\sigma}_{2}$ when the active asset is significantly outperformed
by the reserve asset. Thus, neither of the two strategies can be preferred.
In contrast, for exceptionally high values of $\hat{S}_{2}$ the CPPP
strategy requires significantly more hedging effort than the OBPP
as it is extremely sensitive on changes in the value of the active
asset or the volatility of the index ratio. This sensitivity is even
amplified with higher values of the multiplier $m$. Nevertheless,
for the most probable realizations of the asset ratio $\hat{S}_{2}$
around one, the OBPP appears riskier with respect to a sudden market
drop.

To conclude our analysis of the OBPP and the CPPP strategy we summarize
the main results and give some concluding remarks.

\section{Conclusion}

In this paper we have introduced the class of performance participation
strategies that guarantee a minimum return in terms of a percentage
of a stochastic benchmark while keeping the potential for profits
from the outperformance of a second, riskier asset. The new strategy
class thus represents a generalization of the well-established portfolio
insurance methods where the provided guarantee is not deterministic
anymore but subject to systematic risk instead.

Moreover, with the OBPP and the CPPP we have presented a static as
well as a dynamic example of a performance participation method that
are closely related to the well-known OBPI and CPPI strategy. In fact,
we have shown that the two strategy classes can be transformed into
each other by discounting with the reserve asset as numéraire. Based
on that very important relationship we were able to derive general
analytic expressions not only for all of the moments of the payoff
distributions of the performance participation strategies but also
for the standard OBPI and CPPI method.

In the subsequent analysis we have compared the OBPP and the CPPP
with respect to various criteria, including the payoff distributions
as well as the dynamic behavior. We conclude that neither of the two
strategies generally dominates the other one. This comes from the
non-linearity of the payoff functions. Nevertheless, as the investor-defined
level of performance participation increases the CPPP strategy seems
to be more relevant than the OBPP one. Since the probability of exercising
the exchange option at maturity decreases with the desired participation
level, the upside potential of the OBPP is significantly reduced.

A concluding analysis of the dynamic behavior of the two strategies
showed that the (synthesized) OBPP can actually be considered as a
generalized CPPP strategy with time-variable multiplier. Although
the OBPP payoff exceeds the CPPP one when the associated exchange
option is around or slightly in the money it is more sensitive to
drops in the index ratio $\hat{S}_{2}$ and to transaction costs.
Furthermore, since the OBPP inherent multiplier represents a decreasing
function of $\hat{S}_{2}$, especially in the case of constantly rising
markets the CPPP is likely to outperform the OBPP.

So far we have restricted our analysis of the OBPP and the CPPP to
the comparison of the moments of the payoff distributions as well
as the dynamic behavior. However, a detailed analysis should also
include investor-specific utility functions and criteria of stochastic
dominance which was analyzed for the standard OBPI and CPPI in \citep{ZagstKraus2009}.
This will be the subject of further research.

\appendix

\section{Moments (Proof of Lemma \ref{lem:kth_Moment_PP})}

\label{pf:kth_Moment_PP} Following from (\ref{eq:OBPP_OBPI}) and
(\ref{eq:CPPP_CPPI}) the current value $V^{PP}(t)$ of the performance
participation strategy PP is given by 
\[
V^{PP}(t)=S_{1}(t)\cdot\hat{V}^{PI}(t),
\]
where $\hat{V}^{PI}(t)$ is the current value of the associated portfolio
insurance strategy in the discounted market with risk-free interest
rate $\hat{r}=0$ and risky asset $\hat{S}_{2}=S_{2}/S_{1}$. Thus,
the $k$th moment, $k\in\mathbb{N}$ yields 
\[
m_{k}\left(V^{PP}(t)\right)=\mathbb{E}_{\mathbb{P}}\left[S_{1}(t)^{k}\cdot\hat{V}^{PI}(t)^{k}\right],
\]
where $\mathbb{E}_{\mathbb{P}}$ denotes the (unconditional) expectation
with respect to the real-world measure $\mathbb{P}$. Define for $k\in\mathbb{N}$
the equivalent probability measure $\tilde{\mathbb{P}}_{k}$ via the
Radon-Nikodym derivative%
\footnote{See, e.g., \citet{Shreve2008}.%
} 
\[
\begin{array}{cc}
\left.\frac{d\tilde{\mathbb{P}}_{k}}{d\mathbb{P}}\right|_{\mathcal{F}_{t}}=\tilde{Z}_{k}(t), & \tilde{Z}_{k}(t)=\frac{S_{1}(t)^{k}}{\mathbb{E}_{\mathbb{P}}\left[S_{1}(t)^{k}\right]}=exp\left\{ k\cdot\sigma_{1}\cdot W_{1}(t)-\frac{1}{2}\cdot k^{2}\cdot\sigma_{1}^{2}\cdot t\right\} ,\end{array}
\]
with 
\[
\mathbb{E}_{\mathbb{P}}\left[S_{1}(t)^{k}\right]=S_{1}(0)^{k}\cdot e^{k\cdot\mu_{1}\cdot t+\frac{1}{2}\cdot k\cdot(k-1)\cdot\sigma_{1}^{2}\cdot t}.
\]
Then, applying the Bayes rule%
\footnote{See, e.g., \citet{Shreve2008}.%
} and substituting the explicit asset price $S_{1}(t)$ leads to 
\begin{align*}
m_{k}\left(V^{PP}(t)\right) & =\tilde{\mathbb{E}}_{k}\left[\tilde{Z}_{k}(t)^{-1}\cdot S_{1}(t)^{k}\cdot\hat{V}^{PI}(t)^{k}\right]\\
 & =\mathbb{E}_{\mathbb{P}}\left[S_{1}(t)^{k}\right]\cdot\tilde{m}_{k}\left(\hat{V}^{PI}(t)\right),
\end{align*}
where $\tilde{m}_{k}\left(\hat{V}^{PI}(t)\right)=\tilde{\mathbb{E}}_{k}\left[\hat{V}^{PI}(t)^{k}\right]$,
denotes the $k$th moment of the associated portfolio insurance strategy
with respect to the equivalent probability measure $\tilde{\mathbb{P}}_{k}$.

\section{Moments of the CPPI strategy (Proof of Theorem \ \ref{thm:moments_CPPI})}

\label{pf:kth_Moment_CPPI}The portfolio value of the CPPI strategy
(\ref{eq:value_CPPI}) at time $t\in[0,T]$ is given by 
\[
V^{CPPI}(t)=F^{CPPI}(t)+C^{CPPI}(t),
\]
with deterministic floor 
\[
F^{CPPI}(t)=\alpha^{PI}\cdot V_{0}\cdot e^{-r\cdot(T-t)},
\]
and lognormally distributed cushion process 
\[
C^{CPPI}(t)=\beta_{CPPI}\left(t;\alpha^{PI},m,r,\sigma_{S}\right)\cdot V_{0}\cdot e^{r\cdot t}\cdot\left(\frac{S(t)}{V_{0}\cdot e^{r\cdot t}}\right)^{m},
\]
where%
\footnote{See, e.g., \citet{BertrandPrigent2005} or \citep{ZagstKraus2009}.%
} 
\begin{align}
dC^{CPPI}(t) & =C^{CPPI}\mu_{C^{CPPI}}dt+C^{CPPI}\sigma_{C^{CPPI}}dW_{S}(t), & C_{0}^{CPPI}=V_{0}\cdot\left(1-\alpha^{PI}\cdot e^{-r\cdot T}\right),\label{eq:SDE_C_CPPI}\\
\mu_{C^{CPPI}} & =r+m\cdot\left(\mu_{S}-r\right),\label{eq:mu_C_CPPI}\\
\sigma_{C^{CPPI}} & =m\cdot\sigma_{S}.\label{eq:sigma_C_CPPI}
\end{align}
Hence, by applying the binomial theorem the $k$th moment can be decomposed
as 
\[
m_{k}\left(V^{CPPI}(t)\right)=\sum_{i=0}^{k}{k \choose i}\cdot m_{i}\left(C^{CPPI}(t)\right)\cdot\left(F^{CPPI}(t)\right)^{k-i},
\]
where $m_{i}\left(C^{CPPI}(t)\right)=\mathbb{E}_{\mathbb{P}}\left[C^{CPPI}(t)^{i}\right]$
and $\mathbb{E}_{\mathbb{P}}$ denotes the (unconditional) expectation
with respect to the real-world measure $\mathbb{P}$. Together with
\[
m_{i}\left(C^{CPPI}(t)\right)=\left(C_{0}^{CPPI}\right)^{i}\cdot\exp\left\{ i\cdot\left(\mu_{C^{CPPI}}-\frac{1}{2}\cdot\sigma_{C^{CPPI}}^{2}\right)\cdot t+\frac{1}{2}\cdot i^{2}\cdot\sigma_{C^{CPPI}}^{2}\cdot t\right\} ,
\]
as well as (\ref{eq:mu_C_CPPI}) and (\ref{eq:sigma_C_CPPI}) this
finally leads to 
\begin{align*}
 & m_{k}\left(V^{CPPI}(t)\right)\\
 & =\sum_{i=0}^{k}{k \choose i}\cdot\left(C_{0}^{CPPI}\right)^{i}\cdot\exp\left\{ i\cdot\mu_{C}\cdot t+\frac{1}{2}\cdot i\cdot(i-1)\cdot\sigma_{C}^{2}\cdot t\right\} \cdot\left(\alpha^{PI}\right)^{k-i}\cdot V_{0}^{k-i}\cdot e^{-r\cdot(k-i)\cdot(T-t)}\\
 & =\left(\alpha^{PI}\cdot V_{0}\right)^{k}\cdot\sum_{i=0}^{k}{k \choose i}\cdot e^{-k\cdot r\cdot(T-t)}\cdot\left(\frac{1-\alpha^{PI}\cdot e^{-r\cdot T}}{\alpha^{PI}\cdot e^{-r\cdot T}}\right)^{i}\cdot e^{i\cdot m\cdot\left[\left(\mu_{S}-r\right)+\frac{1}{2}\cdot(i-1)\cdot m\cdot\sigma_{S}^{2}\right]\cdot t}.
\end{align*}

\section{Moments of the OBPI strategy (Proof of Theorem \ \ref{thm:moments_OBPI})}

\label{pf:kth_Moment_OBPI}The payoff of the OBPI strategy (\ref{eq:value_OBPI})
at maturity $T$ is given by 
\[
V^{OBPI}(T)=\alpha^{PI}\cdot V_{0}+\left(p^{PI}\cdot S(T)-\alpha^{PI}\cdot V_{0}\right)^{+}.
\]
Hence, by applying the binomial theorem the $k$th moment can be decomposed
as 
\begin{equation}
m_{k}\left(V^{OBPI}(T)\right)=\left(\alpha^{PI}\cdot V_{0}\right)^{k}+\sum_{i=1}^{k}{{k \choose i}\cdot\left(\alpha^{PI}\cdot V_{0}\right)^{k-i}\cdot UPM_{i}\left(p^{PI}\cdot S(T),\alpha^{PI}\cdot V_{0}\right)},\label{eq:m_k_OBPI_UPM}
\end{equation}
where $UPM_{i}\left(p^{PI}\cdot S(T),\alpha^{PI}\cdot V_{0}\right)=\mathbb{E}_{\mathbb{P}}\left[\left(\left(p^{PI}\cdot S(T)-\alpha^{PI}\cdot V_{0}\right)^{+}\right)^{i}\right]$,
for $i\in\mathbb{N}$ denotes the $i$th upper partial moment of $p^{PI}$
shares of the terminal asset price $S(T)$ with respect to the benchmark
$\alpha^{PI}\cdot V_{0}$. $\mathbb{E}_{\mathbb{P}}$ is the (unconditional)
expectation with respect to the real-world measure $\mathbb{P}$.

Reapplication of the binomial theorem reduces the calculation of $UPM_{i}\left(p^{PI}\cdot S(T),\alpha^{PI}\cdot V_{0}\right)$
to 
\begin{align}
 & UPM_{i}\left(p^{PI}\cdot S(T),\alpha^{PI}\cdot V_{0}\right)\label{eq:UPM_power}\\
 & =\mathbb{E}_{\mathbb{P}}\left[\left(p^{PI}\cdot S(T)\cdot\mathds{1}_{p^{PI}\cdot S(T)\geq\alpha^{PI}\cdot V_{0}}-\alpha^{PI}\cdot V_{0}\cdot\mathds{1}_{p^{PI}\cdot S(T)\geq\alpha^{PI}\cdot V_{0}}\right)^{i}\right]\\
 & =\sum_{l=0}^{i}{i \choose l}\cdot(-1)^{i-l}\cdot\left(\alpha^{PI}\cdot V_{0}\right)^{i-l}\cdot\mathbb{E}_{\mathbb{P}}\left[\left(p^{PI}\cdot S(T)\right)^{l}\cdot\mathds{1}_{p^{PI}\cdot S(T)\geq\alpha^{PI}\cdot V_{0}}\right].\nonumber 
\end{align}

The expected value $\mathbb{E}_{\mathbb{P}}\left[\left(p^{PI}\cdot S(T)\right)^{l}\cdot\mathds{1}_{p^{PI}\cdot S(T)\geq\alpha^{PI}\cdot V_{0}}\right]$
is derived within the scope of the calculation of the fair value of
so-called power options.%
\footnote{See, e.g., \citep{HeynenKat1996} or \citep{MacovschiQuittardPinon2006}.%
} It basically consists in a repeated application of the change of
probability measure defined in (\ref{eq:P_k_tilde}). Thus, we obtain
for $l\in\mathbb{N}_{0}$ 
\begin{align}
\mathbb{E}_{\mathbb{P}}\left[\left(p^{PI}\cdot S(T)\right)^{l}\cdot\mathds{1}_{p^{PI}\cdot S(T)\geq\alpha^{PI}\cdot V_{0}}\right] & =\left(p^{PI}\cdot S_{0}\right)^{l}\cdot e^{l\cdot\mu_{S}\cdot T+\frac{1}{2}\cdot l\cdot(l-1)\cdot\sigma_{S}^{2}\cdot T}\cdot\Phi\left(d_{1,l}\right),\label{eq:U_power}\\
\text{where}\ \ d_{1,l} & =\frac{\ln\left(\frac{p^{PI}\cdot S_{0}}{\alpha^{PI}\cdot V_{0}}\right)+\left[\mu_{S}+\left(l-\frac{1}{2}\right)\cdot\sigma_{S}^{2}\right]\cdot T}{\sigma_{S}\cdot\sqrt{T}}.\nonumber 
\end{align}
Hence, substituting (\ref{eq:UPM_power}) and (\ref{eq:U_power})
in (\ref{eq:m_k_OBPI_UPM}) finally leads to 
\begin{align*}
 & m_{k}\left(V^{OBPI}(T)\right)\\
 & =\left(\alpha^{PI}\cdot V_{0}\right)^{k}\\
 & +\sum_{i=1}^{k}{k \choose i}\cdot\left(\alpha^{PI}\cdot V_{0}\right)^{k-i}\cdot\sum_{l=0}^{i}{i \choose l}\cdot(-1)^{i-l}\cdot\left(\alpha^{PI}\cdot V_{0}\right)^{i-l}\cdot\mathbb{E}_{\mathbb{P}}\left[\left(p^{PI}\cdot S(T)\right)^{l}\cdot\mathds{1}_{p^{PI}\cdot S(T)\geq\alpha^{PI}\cdot V_{0}}\right]\\
 & =\left(\alpha^{PI}\cdot V_{0}\right)^{k}\\
 & +\left(\alpha^{PI}\cdot V_{0}\right)^{k}\cdot\sum_{i=1}^{k}\sum_{l=0}^{i}{k \choose i}\cdot{i \choose l}\cdot(-1)^{i-l}\cdot\left(\frac{p^{PI}\cdot S_{0}}{\alpha^{PI}\cdot V_{0}}\right)^{l}\cdot e^{l\cdot\left[\mu_{S}+\frac{1}{2}\cdot(l-1)\cdot\sigma_{S}^{2}\right]\cdot T}\cdot\Phi\left(d_{1,l}\right).
\end{align*}

\bibliographystyle{plainnat}
\bibliography{PhDreferences}

\end{document}